\documentclass[11pt]{article}

\usepackage[utf8]{inputenc}
\usepackage[T1]{fontenc}
\usepackage{lmodern}
\usepackage[margin=1in]{geometry}
\usepackage{amsmath,amssymb,amsthm}
\usepackage{mathtools}
\usepackage{stmaryrd}
\usepackage{graphicx}
\usepackage{booktabs}
\usepackage{array}
\usepackage{xcolor}
\usepackage{microtype}
\usepackage{enumitem}
\usepackage{caption}
\usepackage{subcaption}
\usepackage[section]{placeins}
\usepackage{algorithm}
\usepackage{algpseudocode}
\algnewcommand\algorithmicparfor{\textbf{parallel for}}
\algnewcommand\algorithmicendparfor{\textbf{end for}}
\algdef{SE}[PARFOR]{ParFor}{EndParFor}[1]%
  {\algorithmicparfor\ #1\ \algorithmicdo}{\algorithmicendparfor}
\usepackage[hidelinks]{hyperref}
\usepackage{cleveref}

\newcommand{\BOX}{\Box}
\newcommand{\DIA}{\Diamond}
\newcommand{\imp}{\rightarrow}
\newcommand{\sat}{\Vdash}
\newcommand{\nsat}{\nVdash}
\newcommand{\frm}{\mathcal{F}}
\newcommand{\mdl}{\mathcal{M}}
\newcommand{\Wset}{W}
\newcommand{\Rel}{R}
\newcommand{\Val}{V}
\newcommand{\tv}[1]{\llbracket #1 \rrbracket}
\newcommand{\suc}{\mathrm{succ}}
\newcommand{\Sub}{\mathrm{Sub}}
\newcommand{\classK}{\mathsf{K}}
\newcommand{\classT}{\mathsf{T}}
\newcommand{\classSfour}{\mathsf{S4}}
\newcommand{\classSfive}{\mathsf{S5}}
\newcommand{\bw}{\mathbin{\&}}
\newcommand{\bcomp}[1]{\overline{#1}}

\newcommand{\NCORPUS}{5{,}621}
\newcommand{\NTOTAL}{5{,}624}
\newcommand{\NRAW}{44{,}678}
\newcommand{\NFRAMESfive}{33{,}554{,}432}
\newcommand{\TOTALEVALS}{1.63\times10^{14}}
\newcommand{\SUSTRATE}{60.2}
\newcommand{\NCERTS}{20{,}990}
\newcommand{\CENSUSMIN}{45}
\newcommand{\CPPSINGLE}{0.0471}
\newcommand{\CPPMULTI}{0.814}
\newcommand{\CPPAVX}{5.43}
\newcommand{\GPUBENCH}{38.33}
\newcommand{\GPUMULTISPEED}{47.1}
\newcommand{\GPUAVXSPEED}{7.06}
\newcommand{\MAXMINCM}{2}
\newcommand{\REFONE}{5{,}086}
\newcommand{\REFTWO}{536}
\newcommand{\KSURV}{2}
\newcommand{\PASSIVEFRONTIER}{4}
\newcommand{\MFRONTIER}{6}
\newcommand{\MIRAGEK}{5}
\newcommand{\NSYNTH}{32}
\newcommand{\NSYNTHCHECK}{128}
\newcommand{\NMINE}{476}
\newcommand{\NMINEVER}{374}
\newcommand{\NMINEUNSEP}{102}

\newcommand{\NEXACTBP}{220}
\newcommand{\NNOCORR}{2{,}606}
\newcommand{\NTRIVIAL}{2{,}798}
\newcommand{\VISRANDRED}{90.9}
\newcommand{\VISDEGRED}{80.5}
\newcommand{\PROGCORPUS}{10{,}000}
\newcommand{\PROGBOOLEAN}{8{,}017}
\newcommand{\PROGUNIVERSE}{5{,}467{,}766}
\newcommand{\PROGSEARCHED}{2{,}691{,}329}
\newcommand{\PROGBUDGET}{1{,}000{,}000}
\newcommand{\INTTHROUGHPUT}{18.87}
\newcommand{\FPTHROUGHPUT}{37.26}
\newcommand{\HBMTHROUGHPUT}{1{,}771}
\newcommand{\SEMUTIL}{98.3}
\newcommand{\ATLASMEM}{51.54}
\newcommand{\SEMBW}{1{,}622}
\newcommand{\SEMENERGY}{10.53}

\theoremstyle{plain}
\newtheorem{theorem}{Theorem}
\newtheorem{proposition}{Proposition}
\newtheorem{lemma}{Lemma}
\newtheorem{corollary}{Corollary}
\theoremstyle{definition}
\newtheorem{definition}{Definition}

\theoremstyle{remark}

\title{\textbf{GPU-Accelerated Search and Certification of Bounded
Indistinguishability in Finite Kripke Semantics}}

\author{
Faruk Alpay\thanks{Corresponding author: \texttt{alpay@lightcap.ai}.}\quad
Bar\i{}\c{s} Ba\c{s}aran\\[3pt]
\small Department of Computer Engineering, Bah\c{c}e\c{s}ehir University, Istanbul, Turkey\\
\small \{faruk.alpay,\,baris.basaran\}@bahcesehir.edu.tr
}
\date{\today}

\begin{document}
\maketitle

\begin{abstract}
The finite model property guarantees finite countermodels for invalid modal
formulas, but its standard bounds are exponential and say little about the model
size at which two apparently equivalent formulas first diverge. We turn both
questions into accelerator-enabled, independently checkable experiments. A set of
worlds is encoded as one integer, reducing $\BOX$ and $\DIA$ to bitmask containment
and intersection; we prove the evaluator correct and fuse formula batches into a
block-reduced CUDA kernel. Matched measurements separate a scalar CPU reference,
a bit-parallel 26-thread OpenMP evaluator, and the fused H100 implementation, and
ablate syntax length, modal depth, variable count, frame filtering, valuation
count, fusion, and atomic contention. Hardware calibration reaches
$\INTTHROUGHPUT$ source-level uint32 TOPS and $\HBMTHROUGHPUT$ GB/s; FP32 is
reported only as a separate device diagnostic. The integer semantic stress
sustains $\SEMUTIL\%$ mean NVML busy time at $\SEMENERGY$ nJ/evaluation, and the
atlas materialisation uses $\ATLASMEM$ GB. For $\NTOTAL$ formulas over
$\classK,\classT,\classSfour,\classSfive$, an exhaustive census through five worlds
performs $\TOTALEVALS$ evaluations in $\CENSUSMIN$ minutes and emits checkable
countermodels; every $\classK$ formula that is refuted fails on at most two worlds,
while non-refutation in the stronger classes remains explicitly bounded. Active
synthesis then finds that $\alpha_2=(\BOX\DIA)^2\top$ and
$\alpha_3=(\BOX\DIA)^3\top$ agree on every frame of at most five worlds yet split
on a certified six-world path. Finally, a progressive semantic representation built from
$\PROGCORPUS$ representatives, including $\PROGBOOLEAN$ Boolean-composed formulas,
ranks a common $\PROGUNIVERSE$-pair universe under PCA, UMAP, spectral, and random
layouts. Each method receives a $\PROGBUDGET$-pair budget; $\PROGSEARCHED$ unique
pairs are stressed on six-, seven-, and eight-world frame families, and every
reported late hit is exhaustively rechecked over all labelled frames through five
worlds before its independent witness is accepted. Raw high-dimensional features
are strongest at tight neighbourhood budgets and PCA-10 is strongest under direct
million-pair ranking; the two-dimensional atlas wins only at the broadest tested
$k$ and is presented as a datashaded candidate-generation view, not as a
standalone discovery engine. Code, data, \NCERTS checked certificates, and
rendering scripts are provided.
\end{abstract}

\section{Introduction}
\label{sec:intro}

A modal formula that is not valid on a class of frames is refuted by a
countermodel: a finite frame in the class, a valuation, and a world at which the
formula is false. For the standard logics, the finite model property (FMP)
guarantees that such a countermodel exists and is finite, and filtration through
the subformulas yields a model with at most $2^{|\Sub(\varphi)|}$
worlds~\cite{blackburn,chagrov}. This bound is worst-case and exponential, and it
is the only generally available estimate of countermodel size. Two natural and,
to our knowledge, largely unmeasured questions are: how large are minimal
countermodels in practice, relative to that bound; and how large must a finite
model be before it can distinguish two formulas that coincide on all smaller
models?

The second question adds a model-size axis to a classical depth-stratified
picture. A $d$-round bisimulation game characterises agreement on formulas of
modal depth at most $d$, the modal analogue of an Ehrenfeucht--Fra\"iss\'e
theorem~\cite{dawarotto}; van Benthem's characterisation and Rosen's finite-model
version identify modal logic with the bisimulation-invariant fragment of first-
order logic~\cite{vanbenthem,rosen}. These results stratify \emph{formula depth}
on a fixed pair of pointed models. They do not record the \emph{model size} of the
smallest counterexample to a fixed formula equivalence. We make that second,
orthogonal quantity the object of study. We call a pair of
formulas that are logically equivalent over all models of at most $k$ worlds but
not over some model of $k+1$ a \emph{$k$-indistinguishable pair} (or, informally, a
semantic mirage), and we ask how large $k$ can be for formulas of bounded syntax.

These questions are decidable but expensive. Modal satisfiability is
PSPACE-complete already for $\classK$~\cite{ladner}, so exhaustive enumeration is
feasible only at small model sizes; the dominant scalable techniques, BDD-based and
bounded symbolic model checking~\cite{bryant,mcmillan}, compress one large state
space rather than sweep many small models against many formulas. Our regime is the
opposite: a complete census of many formulas over all small frames at once. For
this we observe that finite Kripke evaluation is almost entirely bitwise. Encoding
a set of worlds as an integer turns the Boolean connectives into bit operations
and the modal operators into a containment test ($\BOX$) and a non-emptiness test
($\DIA$) between a world's successor mask and the truth set of a subformula
(\Cref{sec:bitmask}). A single kernel then evaluates a whole corpus against all
frames of a fixed size in one pass, without ever forming the (frame $\times$
valuation) truth tensor, which makes the complete universe up to five worlds
tractable on one accelerator (\Cref{sec:engine}).

\paragraph{Contributions and findings.}
\begin{enumerate}[leftmargin=1.5em,itemsep=2pt]
\item A bitmask evaluator with a correctness proof (\Cref{thm:correct}) and an
explicit complexity analysis (\Cref{prop:complexity}); a fused data-parallel
kernel; verifiable certificates (\Cref{prop:cert}); and a bounded decision theorem
from the FMP (\Cref{thm:decision}). All $\NCERTS$ certificates verify.
\item \emph{Minimal countermodels are small in the bounded corpus.} Over formulas
of at most seven nodes and two variables, every $\classK$-refutable formula is refuted on at
most $\MAXMINCM$ worlds (\Cref{prop:frontier}); the standard filtration bound is
empirically far from tight in this regime
(\Cref{cor:loose}).
\item \emph{Active bounded-indistinguishability synthesis.} We define
$k$-indistinguishable pairs, reduce them to biconditional countermodels
(\Cref{prop:mirage}), and synthesize a certified $\MIRAGEK$-indistinguishable pair
of at most seven nodes, separated first at $\MFRONTIER$ worlds
(\Cref{prop:synthmirage}).
\item A matched system study across scalar CPU, bit-parallel/OpenMP CPU, and
fused GPU execution, including launch occupancy estimates, local/register
pressure, output traffic, batching, valuation decoding, atomic contention, and
scale-out failure modes
(\Cref{sec:systemevidence}).
\item A representation-guided candidate-retrieval study over $\PROGCORPUS$
formulas. We compare raw features, PCA, UMAP, spectral, and random representations
on one pair universe using retrieval yield and embedding-fidelity diagnostics;
the two-dimensional views use density aggregation and every accepted candidate is
decided by exhaustive search and the independent verifier
(\Cref{sec:aggregate}).
\item A consistency check via correspondence recovery, reproducing the Sahlqvist
landmarks and isolating McKinsey, consistent with the Goldblatt--Thomason theory of
modal definability~\cite{goldblatt} (\Cref{sec:backproject}).
\end{enumerate}

We are explicit about scope (\Cref{sec:limits}): exhaustive search is a decision
procedure only within the FMP bound, and quantities estimated from sampled frames
beyond the exhaustive range are estimates, while every reported countermodel and
every reported separation is a conclusive, certified witness.

\section{Preliminaries}
\label{sec:prelim}

Modal formulas over variables $\{p_0,p_1,\dots\}$ are
$\varphi ::= \top \mid \bot \mid p_i \mid \neg\varphi \mid \varphi\wedge\varphi
\mid \varphi\vee\varphi \mid \varphi\imp\varphi \mid \BOX\varphi \mid \DIA\varphi$,
with $\Sub(\varphi)$ the subformulas and $|\varphi|$ the node count. A
\emph{frame} $\frm=(\Wset,\Rel)$ has finite $\Wset$ and $\Rel\subseteq\Wset^2$; a
\emph{model} adds a valuation $\Val$. Satisfaction is standard, with
\[
\mdl,w\sat\BOX\varphi \iff \forall v\,(\Rel(w,v)\Rightarrow \mdl,v\sat\varphi),
\quad
\mdl,w\sat\DIA\varphi \iff \exists v\,(\Rel(w,v)\wedge \mdl,v\sat\varphi).
\]
The truth set is $\tv{\varphi}^{\mdl}=\{w:\mdl,w\sat\varphi\}$;
$\frm\models\varphi$ means $\tv{\varphi}^{\mdl}=\Wset$ for all $\Val$; a
\emph{countermodel} is $(\frm,\Val,w)$ with $\mdl,w\nsat\varphi$. We use $\classK$
(all frames), $\classT$ (reflexive), $\classSfour$ (preorders), $\classSfive$
(equivalences), with $\classSfive\subsetneq\classSfour\subsetneq\classT\subsetneq
\classK$, so validity is monotone. Each class has the FMP with a computable bound
$b_{\mathcal C}(\varphi)$: standard or selective filtration gives finite bounds,
including logic-specific constructions for transitive extensions of
$\mathsf{K4}$~\cite{blackburn,chagrov,fine,fine2}. Two pointed models are modally
equivalent iff bisimilar (image-finite case), and $d$-bisimilarity is equivalent
to agreement on modal formulas of depth at most $d$ over a finite
signature~\cite{hennessymilner,dawarotto}. Rosen's finite van Benthem theorem
preserves the bisimulation-invariant characterisation over finite
structures~\cite{rosen}; these are depth/expressiveness results rather than bounds
on the size of a smallest separating model. The modal $\mu$-calculus adds
fixpoints and has its own strict alternation hierarchy~\cite{bradfield}; our
constant-only path family uses no fixpoint and makes no hierarchy claim.
Moreover,
modally definable frame classes are characterised by the Goldblatt--Thomason
theorem~\cite{goldblatt}, and the McKinsey axiom $\BOX\DIA p\imp\DIA\BOX p$ defines
a class that is not first-order definable~\cite{chagrov,vanbenthem}.

\section{A bitmask model of Kripke evaluation}
\label{sec:bitmask}

Fix $n=|\Wset|$ and identify $\Wset$ with $\{0,\dots,n-1\}$. A subset
$S\subseteq\Wset$ is the integer $\widehat S=\sum_{w\in S}2^w$, with
$\mathit{full}=2^n-1$ and $\bcomp x=\mathit{full}\bw(\mathord\sim x)$. A frame is
the array of successor masks $\suc[w]=\widehat{\{v:\Rel(w,v)\}}$; a valuation is one
mask per variable; truth sets are masks computed by the bitwise recurrence on
$\widehat{\tv\cdot}$ with $\widehat{\tv{\neg\varphi}}=\bcomp{\widehat{\tv\varphi}}$,
$\widehat{\tv{\varphi\wedge\psi}}=\widehat{\tv\varphi}\bw\widehat{\tv\psi}$,
$\widehat{\tv{\varphi\vee\psi}}=\widehat{\tv\varphi}\mathbin{|}\widehat{\tv\psi}$,
and the modal cases of \Cref{lem:modal}.

\begin{lemma}[Modal bitmask identities]
\label{lem:modal}
For every model on $n$ worlds and each world $w$,
\begin{align}
w\in\tv{\BOX\varphi} &\iff \suc[w]\bw\bcomp{\widehat{\tv\varphi}}=0
   \iff \suc[w]\subseteq\widehat{\tv\varphi}, \label{eq:box}\\
w\in\tv{\DIA\varphi} &\iff \suc[w]\bw\widehat{\tv\varphi}\neq 0. \label{eq:dia}
\end{align}
\end{lemma}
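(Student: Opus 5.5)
The plan is to unfold both sides through the encoding $S\mapsto\widehat S$ and use only its basic properties on $n$-bit integers. First I will record three facts about the map $S\mapsto\widehat S$ restricted to $S\subseteq\Wset=\{0,\dots,n-1\}$:
\begin{enumerate}[leftmargin=1.5em,itemsep=2pt]
\item It is injective and sends $\emptyset$ to $0$, since binary representation is unique.
\item It turns intersection into bitwise and: $\widehat{S\cap T}=\widehat S\bw\widehat T$. This is because bit $v$ of $\widehat S\bw\widehat T$ is set iff $v\in S$ and $v\in T$.
\item It turns complement into $\bcomp{\cdot}$: $\widehat{\Wset\setminus S}=\bcomp{\widehat S}$. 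This holds because $\mathit{full}=2^n-1$ has exactly bits $0,\dots,n-1$ set, and masking $\mathord\sim\widehat S$ with $\mathit{full}$ discards any bits at positions $\ge n$ that the bitwise complement would otherwise introduce.
\end{enumerate}
Together with the definition $\suc[w]=\widehat{\{v:\Rel(w,v)\}}$, these facts let every bit expression in the lemma be read as a set expression.

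For \eqref{eq:dia}, I will unfold the semantic clause. We have $w\in\tv{\DIA\varphi}$ iff some $v$ satisfies $\Rel(w,v)$ and $v\in\tv\varphi$, that is, iff $\{v:\Rel(w,v)\}\cap\tv\varphi\neq\emptyset$. Applying fact 2 and then fact 1 (a set is empty iff its code is $0$), this is equivalent to $\suc[w]\bw\widehat{\tv\varphi}\neq0$.

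For \eqref{eq:box}, I will unfold the clause in the same way. We have $w\in\tv{\BOX\varphi}$ iff every successor of $w$ lies in $\tv\varphi$. This is equivalent to $\{v:\Rel(w,v)\}\cap(\Wset\setminus\tv\varphi)=\emptyset$, and by facts 2 and 3 this becomes $\suc[w]\bw\bcomp{\widehat{\tv\varphi}}=0$. The second equivalence is read with $\subseteq$ on masks meaning bitwise inclusion, i.e.\ $x\subseteq y$ iff $x\bw\mathord\sim y=0$. Since $\suc[w]\le\mathit{full}$, the extra masking by $\mathit{full}$ is harmless, so this inclusion coincides with set inclusion $\{v:\Rel(w,v)\}\subseteq\tv\varphi$.

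No step is genuinely hard. The one point needing care is the complement in fact 3: without the $\mathit{full}$ mask, $\mathord\sim$ would set high bits outside $\Wset$. I will therefore state explicitly that all masks considered are bounded by $\mathit{full}$, so the encoding is a Boolean-algebra isomorphism between $\mathcal P(\Wset)$ and $\{0,\dots,\mathit{full}\}$.
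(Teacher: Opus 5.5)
Your proposal is correct and follows the paper's proof: you unfold $\BOX$ and $\DIA$ into inclusion in, and nonempty intersection with, the truth set, and then translate to masks using $S\subseteq T$ iff $S\bw\bcomp{T}=0$. The only difference is that you state the Boolean-algebra properties of the encoding explicitly, including the $\mathit{full}$-masking in the complement, which the paper uses without comment.
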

\begin{proof}
$w\sat\BOX\varphi$ iff every $R$-successor of $w$ satisfies $\varphi$, i.e.\
$\{v:\Rel(w,v)\}\subseteq\tv\varphi$; as bitmasks $S\subseteq T$ iff
$S\bw\bcomp T=0$, giving \eqref{eq:box}. Dually $w\sat\DIA\varphi$ iff
$\{v:\Rel(w,v)\}\cap\tv\varphi\neq\varnothing$, i.e.\ \eqref{eq:dia}.
\end{proof}

\begin{theorem}[Correctness]
\label{thm:correct}
For every formula $\varphi$ and model $\mdl$ on $n$ worlds, the integer computed by
the recurrence equals $\widehat{\tv{\varphi}^{\mdl}}$. Hence $\frm\models\varphi$
iff the computed mask is $\mathit{full}$ under every valuation, and $(\frm,\Val,w)$
is a countermodel iff bit $w$ of the mask is $0$.
\end{theorem}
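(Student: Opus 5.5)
The plan is structural induction on $\varphi$, proving the invariant that the recurrence's output for $\varphi$ equals $\widehat{\tv{\varphi}^{\mdl}}$ for a fixed model $\mdl=(\Wset,\Rel,\Val)$ on $n$ worlds. The map $S\mapsto\widehat S$ is a bijection between subsets of $\Wset$ and integers in $[0,\mathit{full}]$. Under this bijection, bit $w$ of $\widehat S$ is $1$ exactly when $w\in S$. I will use this membership reading throughout.

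\textbf{Base cases.} $\top$ maps to $\mathit{full}=\widehat{\Wset}$ and $\bot$ maps to $0=\widehat{\varnothing}$. Each $p_i$ maps to the stored valuation mask, which is $\widehat{\Val(p_i)}$ by definition.

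\textbf{Inductive cases.} For each connective I will check bitwise, one position $w$ at a time, that the operation matches the set operation.
\begin{itemize}[leftmargin=1.5em,itemsep=1pt]
\item For $\neg$, the mask $\bcomp x=\mathit{full}\bw(\mathord\sim x)$ has bit $w$ set iff $w<n$ and bit $w$ of $x$ is $0$. So it encodes $\Wset\setminus\tv\varphi$. The masking with $\mathit{full}$ matters here, because it keeps the high bits zero.
\item For $\wedge$ and $\vee$, bitwise $\bw$ and $\mathbin{|}$ encode intersection and union.
\item For $\imp$, I will take the recurrence to be $\bcomp x\mathbin{|}y$, which encodes $(\Wset\setminus\tv\varphi)\cup\tv\psi$.
\item For $\BOX$ and $\DIA$, the recurrence sets bit $w$ of the result according to the tests in \Cref{lem:modal}, applied to the inductively correct mask $\widehat{\tv\varphi}$. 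That lemma gives $w\in\tv{\BOX\varphi}$ (respectively $\tv{\DIA\varphi}$) iff the test succeeds. The assembled mask $\sum_w[\text{test}_w]2^w$ is therefore $\widehat{\tv{\BOX\varphi}}$ (respectively $\widehat{\tv{\DIA\varphi}}$).
\end{itemize}

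\textbf{Consequences.} $\frm\models\varphi$ means $\tv\varphi^{\mdl}=\Wset$ for every $\Val$. By injectivity of $\widehat{\cdot}$ this is equivalent to the computed mask equalling $\mathit{full}$ for every $\Val$. Likewise, $\mdl,w\nsat\varphi$ iff $w\notin\tv\varphi$ iff bit $w$ of the mask is $0$.

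\textbf{Main obstacle.} The only delicate point is bookkeeping rather than mathematics. The invariant must state that every intermediate mask lies in $[0,\mathit{full}]$, so that complement via $\mathord\sim$ does not leak high bits; this is why $\bcomp{\cdot}$ is defined with $\mathit{full}$. In addition, the recurrence's handling of $\imp$, $\top$ and $\bot$ is not spelled out, so I will fix it explicitly as above.
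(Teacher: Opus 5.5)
Your proposal is correct and follows the paper's own proof. Both use structural induction: atoms and constants hold by the encoding, the Boolean connectives are checked bit by bit, the modal cases come from \Cref{lem:modal} applied to the inductively correct mask, and the validity and countermodel consequences are read off from $\tv{\varphi}^{\mdl}=\Wset$ and bit membership. Your extra bookkeeping only makes explicit what the paper leaves implicit: fixing $\imp$ as $\bcomp x\mathbin{|}y$, and noting that masks stay in $[0,\mathit{full}]$ (which already follows from the induction hypothesis $x=\widehat{\tv\varphi}$).
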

\begin{proof}
Structural induction on $\varphi$. Atoms and constants hold by the encoding; the
Boolean cases realise the set operations bit by bit using the induction
hypothesis; the modal cases are \Cref{lem:modal} applied to the truth set
$\widehat{\tv\varphi}$ given by the hypothesis. The validity and countermodel
characterisations follow from $\tv{\varphi}^{\mdl}=\Wset$ and from bit $w$ of the
mask being $0$.
\end{proof}

\begin{proposition}[Complexity]
\label{prop:complexity}
Let $\varphi$ have $|\varphi|$ nodes, $d$ of them modal. Evaluating $\varphi$ on one
$n$-world model costs $O(|\varphi|+d\,n)$ word operations. Deciding
$\frm\models\varphi$ over a set $\mathfrak F$ of $n$-world frames and all
valuations of the $k$ variables of $\varphi$ costs
$O(|\mathfrak F|\cdot 2^{kn}\cdot(|\varphi|+d\,n))$ word operations in
$O(|\mathfrak F|\,n)$ memory; no $(\text{frame}\times\text{valuation})$ truth tensor
is stored.
\end{proposition}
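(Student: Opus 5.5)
The plan is to count word operations along the structural recurrence of \Cref{thm:correct}, evaluating $\varphi$ bottom-up (post-order over its syntax tree), and to assume $n$ is at most the machine word size so every mask is a single word. I will split the nodes of $\varphi$ into non-modal nodes (constants, atoms, Boolean connectives) and the $d$ modal nodes, and then lift the per-model cost to the full sweep over frames and valuations.

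\textbf{Non-modal nodes.} Each node has its children's masks available. A constant is $0$ or $\mathit{full}$, an atom reads one valuation mask, negation is $\mathit{full}\bw(\mathord\sim x)$, and $\wedge,\vee,\imp$ (the last as $\bcomp x\mathbin{|}y$) are each $O(1)$ word operations. Summed over the nodes this gives $O(|\varphi|)$.

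\textbf{Modal nodes.} For $\BOX\psi$ I precompute $\bcomp{\widehat{\tv\psi}}$ once. Then, for each $w<n$, I test $\suc[w]\bw\bcomp{\widehat{\tv\psi}}=0$ and set bit $w$ of the result accordingly, which is correct by \Cref{lem:modal}. $\DIA\psi$ is handled the same way using \eqref{eq:dia}. Each modal node therefore costs $O(n)$, so the $d$ modal nodes cost $O(dn)$, and one model costs $O(|\varphi|+dn)$ in total. The main point I have to argue, rather than just compute, is that the modal step genuinely needs a pass over worlds: the result is assembled bit by bit, so $\Theta(n)$ word operations per modal node is the right charge. I will also state the word-size assumption explicitly, since for $n$ larger than the word length each mask occupies $\lceil n/64\rceil$ words and the bounds pick up that factor.

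\textbf{Full sweep.} For a frame in $\mathfrak F$, the valuations of the $k$ variables are exactly the tuples of $k$ masks in $\{0,\dots,2^n-1\}$, so there are $2^{kn}$ of them. I will enumerate them by decoding a counter in $[0,2^{kn})$ into $k$ fields of $n$ bits, which costs $O(k)=O(|\varphi|)$ word operations and is absorbed into the per-model cost. By \Cref{thm:correct}, $\frm\models\varphi$ holds iff the root mask equals $\mathit{full}$ for every valuation, and each such comparison is $O(1)$. Multiplying gives $O(|\mathfrak F|\cdot2^{kn}\cdot(|\varphi|+dn))$.

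\textbf{Memory.} The stored input is the frame set, $n$ successor masks per frame, which is $O(|\mathfrak F|\,n)$. Working storage is one mask per subformula for the current model, $O(|\varphi|)$, and this is reused across valuations. Validity is accumulated as one Boolean per frame, or a single global flag, which the early exit on the first non-$\mathit{full}$ mask can use. Because each valuation's result is consumed immediately, no array indexed by (frame, valuation) is ever formed, so no truth tensor is stored. Taking the formula size as a constant relative to the frame data gives the stated $O(|\mathfrak F|\,n)$.
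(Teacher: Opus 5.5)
Your proposal is correct and follows essentially the same route as the paper: $O(1)$ work per connective, an $O(n)$ loop over worlds per modal node via \Cref{lem:modal}, multiplication by $|\mathfrak F|$ frames and $2^{kn}$ index-enumerated valuations, and memory dominated by the $O(|\mathfrak F|\,n)$ successor array with no truth tensor formed. Your extra details are ones the paper's one-line proof leaves implicit, since it speaks only of a constant number of resident per-formula reductions: the assumption that $n$ fits in a machine word, absorbing the $O(k)\subseteq O(|\varphi|)$ decoding cost, and treating the $O(|\varphi|)$ working stack as lower order. Your $\Theta(n)$ remark is unnecessary, since only an upper bound is claimed.
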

\begin{proof}
Each connective is $O(1)$ and each modal operator is the $O(n)$ loop of
\Cref{lem:modal}, giving the per-model bound; multiplying by $|\mathfrak F|$ frames
and $(2^n)^k$ valuations gives the total. Frames occupy $O(n)$ words each and
valuations are enumerated by index, so only the frame array and a constant number
of per-formula reductions are resident.
\end{proof}

The recurrence uses only integer bitwise operations, so it is deterministic across
backends: a CPU (NumPy, reference) and a GPU (CuPy) implementation
compute the same integers, which we verified bit-for-bit on every axiom and frame
size used below. This determinism is what makes the certificates of
\Cref{sec:certificates} trustworthy regardless of where the search ran.

\section{Data-parallel evaluation and measured baselines}
\label{sec:engine}

The evaluation is data-parallel along frames and valuations. Rather than launching
one kernel per operator, we compile the whole corpus into a single kernel
(\Cref{alg:scan}): each thread owns one frame, sweeps every valuation enumerated
from its loop index (so no valuation array is materialised), evaluates each formula
on a small in-register bitmask stack by \Cref{thm:correct}, and reduces with
atomics into per-formula counters. One launch performs
$|\text{corpus}|\times|\mathfrak F|\times 2^{kn}$ evaluations and emits only the
number of falsifying valuations, the first falsifying valuation, and a falsifying
world per formula.

\begin{algorithm}[t]
\caption{Fused corpus scan over one $(\text{class},n)$ layer}
\label{alg:scan}
\begin{algorithmic}[1]
\Require successor masks $\suc[0\mathinner{\ldotp\ldotp}F)[\,0\mathinner{\ldotp\ldotp}n)$; compiled programs of $B$ formulas; $\mathit{full}=2^n-1$
\Ensure $\textsc{fail}[b]$, $\textsc{first}[b]$ per formula
\State $\textsc{fail}[\cdot]\gets 0$;\quad $\textsc{first}[\cdot]\gets\infty$
\ParFor{$f \gets 0$ \textbf{to} $F-1$}
  \State load $\suc[f]$ into registers
  \For{$b \gets 0$ \textbf{to} $B-1$}
    \State $\mathit{lf}\gets 0$
    \For{$v \gets 0$ \textbf{to} $2^{k_b n}-1$}
      \State decode variable masks of $v$; evaluate program $b$ to mask $t$
      \If{$t \neq \mathit{full}$} $\mathit{lf}\gets\mathit{lf}+1$ \EndIf
    \EndFor
    \If{$\mathit{lf}>0$}
      \State $\textsc{atomicAdd}(\textsc{fail}[b],\mathit{lf})$;\;
             $\textsc{atomicMin}(\textsc{first}[b],f)$
    \EndIf
  \EndFor
\EndParFor
\end{algorithmic}
\end{algorithm}

The implementation uses 256 threads per block. Successor masks are loaded once
per frame thread; valuations are decoded by shifts and masks from the loop index,
so neither a valuation array nor a truth tensor is materialised. Formula programs
are concatenated RPN bytecode. The compiler reports 32 registers and 272 bytes of
thread-local storage for both the single-formula and batch kernels; the latter
also uses 4,096 bytes of shared memory for block reduction. These resources permit
eight 256-thread blocks per SM on H100, a theoretical occupancy of 100\%. The
local-storage report is important: the fixed 48-slot semantic stack is not free,
and long programs can generate local-memory traffic even when launch occupancy
remains high.

The reduction is hierarchical. Each thread accumulates one formula's failures
over its valuations, a block reduces 256 frame-local counts in shared memory, and
only the block leader updates the global count and first-frame index. The original
frame-atomic implementation issued up to two atomics per failing
frame--formula pair. The revised kernel issues at most two per block--formula pair,
which directly targets contention rather than hiding it in an aggregate runtime.
Separate formula launches also emit 16 bytes per frame and formula; the fused
reduction emits 16 bytes per formula. At 262,144 frames this is a
262,144-fold output-traffic reduction, before accounting for repeated frame reads.

By \Cref{prop:complexity} the kernel never forms the truth tensor, which at five
worlds over all $\classK$-frames would hold $3.4\times10^{10}$ entries. On one
NVIDIA H100 PCIe (80\,GB, CUDA 12.8) the census of \Cref{sec:frontier} performs
$\TOTALEVALS$ formula evaluations in $\CENSUSMIN$ minutes, an end-to-end sustained
rate of $\SUSTRATE$ billion evaluations per second. We do not label a smaller
microbenchmark as a ``peak'' rate; device calibration and saturation measurements
are separated in \Cref{sec:systemevidence}.

For a direct baseline we implemented the same RPN stack machine in C++17, compiled
with \texttt{-O3 -march=native} and OpenMP, and added a hand-written AVX-512F path
that evaluates 16 valuations per vector. We evaluated
$\BOX(p\imp q)\imp(\DIA p\imp\DIA q)$ on the same prefix of 262,144 labelled
five-world frames and all 1,024 valuations. \Cref{tab:performance} reports measured
wall-clock throughput on the H100 host's Intel Xeon Platinum 8480+; CPU and GPU
checksums agree. The auto-vectorised bitmask/OpenMP row records whatever
\texttt{-march=native} selects; the explicit AVX-512 row is the stronger CPU
baseline. The H100 is $\GPUMULTISPEED\times$ faster than the former and
$\GPUAVXSPEED\times$ faster than the latter on this matched workload.
The higher $\SUSTRATE$ census rate reflects batching many shorter formulas and
should not be compared as if it were the same kernel.

\begin{table}[htbp]
\centering\small
\begin{tabular}{@{}lrrr@{}}
\toprule
Backend & Parallelism & G evaluations/s & H100 speedup \\
\midrule
C++17, \texttt{-O3 -march=native} & 1 thread & $\CPPSINGLE$ & $813.4\times$ \\
C++17 bitmask + OpenMP & 26 threads & $\CPPMULTI$ & $\GPUMULTISPEED\times$ \\
C++17 AVX-512 + OpenMP & 26 threads & $\CPPAVX$ & $\GPUAVXSPEED\times$ \\
CUDA fused evaluator & H100 & $\GPUBENCH$ & $1.0\times$ \\
\bottomrule
\end{tabular}
\caption{Matched-workload implementation baseline: all three regimes execute the
same RPN program over the same 268,435,456 frame--valuation cases and produce the
same checksum. The explicit AVX-512 row prevents the accelerator speedup from
being stated only against an under-optimised scalar CPU path. Frame isomorphisms are intentionally not quotiented out; the
experiment scans adjacency matrices, while formulas are deduplicated separately
by their exact small-frame semantic signatures.}
\label{tab:performance}
\end{table}

BDD and SAT solvers optimise a different workload---symbolic search in one model
or for one formula---so a raw evaluation-rate comparison would be misleading.
Instead, SAT is used as an independent bounded-minimality check for the mirage
family below. The explicit GPU census and symbolic search are complementary, not
interchangeable, regimes~\cite{bryant,mcmillan}.

\section{Certificates and bounded decision}
\label{sec:certificates}

\begin{definition}[Certificate]
A certificate $c=(\varphi,n,\suc,\Val,w,\mathcal C)$ records a formula, world
count, frame as successor masks, valuation, world, and class. It is \emph{valid}
if $\frm\in\mathcal C$ and $\mdl,w\nsat\varphi$.
\end{definition}

\begin{proposition}[Soundness]
\label{prop:cert}
If $c$ is valid then $\varphi$ is not valid on $\mathcal C$, and validity of $c$ is
decidable in $O(|\varphi|\,n)$ by the recursive satisfaction definition,
independently of the evaluator that produced $c$.
\end{proposition}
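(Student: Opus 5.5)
The proof has two independent parts: soundness of a valid certificate, and decidability of validity within the stated cost.

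\textbf{Soundness.} Soundness is immediate from the definitions in \Cref{sec:prelim}. If $c$ is valid, then $\frm\in\mathcal C$, and the model $\mdl=(\frm,\Val)$ has a world $w$ with $\mdl,w\nsat\varphi$. Hence $\tv{\varphi}^{\mdl}\neq\Wset$, so $\frm\not\models\varphi$. Since $\frm$ lies in $\mathcal C$, $\varphi$ is not valid on $\mathcal C$. No appeal to the evaluator or to \Cref{thm:correct} is needed, because the witness is checked directly against the satisfaction relation.

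\textbf{Decidability and cost.} Validity of $c$ is the conjunction of two checks.

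The first check is class membership $\frm\in\mathcal C$, decided from the successor masks. For $\classK$ it is trivial. For $\classT$ we test bit $w$ of $\suc[w]$ for each $w$. For $\classSfour$ we add transitivity: for each $w$ and each $v\in\suc[w]$, test $\suc[v]\subseteq\suc[w]$. For $\classSfive$ we add symmetry: $v\in\suc[w]\Rightarrow w\in\suc[v]$. These tests depend on $n$ and $\mathcal C$ but not on $\varphi$. Counting word operations, transitivity costs $O(n^2)$ and the others $O(n)$.

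The second check is $\mdl,w\nsat\varphi$, decided by computing truth sets bottom-up over $\Sub(\varphi)$ from the recursive clauses. The check should use the satisfaction clauses themselves, not \Cref{lem:modal}, so that it is independent of the evaluator. Each Boolean node costs $O(1)$ word operations on truth sets of size at most $n$. Each modal node costs $O(n)$: for each world it inspects successors via one word operation against the child's set. The total is $O(|\varphi|\,n)$, and we then read bit $w$.

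\textbf{Main obstacle: accounting.} The main obstacle is the cost accounting. The frame-class test for $\classSfour$ costs $O(n^2)$, which is not dominated by $|\varphi|\,n$ when $n>|\varphi|$. I would handle this in one of two ways. The first is to state that the $O(|\varphi|\,n)$ bound concerns the satisfaction part, with the class check an additive $O(n^2)$ that is constant for the bounded $n\le 8$ used in the paper. The second is to note that, with successor masks as words, transitivity can be checked as $\suc[v]\bw\bcomp{\suc[w]}=0$ for $v\in\suc[w]$, which is still $O(n^2)$ in the worst case. I would take the first option, stating the bound as $O(|\varphi|\,n+n^2)$ word operations. That is $O(|\varphi|\,n)$ whenever $n\le|\varphi|$, or for fixed small $n$.

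A pointwise recursion on pairs $(\psi,w)$ costs $O(|\varphi|\,n^2)$ bit operations. Computing truth sets in the word model instead gives the claimed $O(|\varphi|\,n)$.
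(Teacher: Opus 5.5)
Your argument is the paper's: soundness is read directly off the definitions, since a valid $c$ is a countermodel on a frame in $\mathcal C$. The cost likewise comes from evaluating the satisfaction clauses over the subformulas.

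Where you go beyond the paper is the accounting, and there you are more careful than its proof. The paper costs only the evaluation of $\mdl,w\sat\varphi$ and never mentions the other conjunct of certificate validity, $\frm\in\mathcal C$. Your point about that conjunct is correct. The transitivity test for $\classSfour$ and $\classSfive$ costs $O(n^2)$ word operations, which is not absorbed by $|\varphi|\,n$ when $n>|\varphi|$ (take $\varphi=p$). So $O(|\varphi|\,n+n^2)$ is the bound the argument actually delivers. It reduces to the stated $O(|\varphi|\,n)$ only for bounded $n$ (every certificate in the paper has $n\le 16$) or when $n\le|\varphi|$. Present this as the precise reading of the proposition, not as a proof of its literal statement for unbounded $n$.

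Two small corrections.

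First, symmetry checked as you describe ($v\in\suc[w]\Rightarrow w\in\suc[v]$) is also $O(n^2)$ in the worst case, not $O(n)$. This is harmless, because $\classSfive$ already pays $O(n^2)$ for transitivity.

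Second, your instruction to avoid \Cref{lem:modal} conflicts with your $O(n)$ cost per modal node. That cost comes exactly from the one-word containment/intersection test stated in that lemma. Independence in the paper's sense only requires that the checker share no code or state with the search, so simply reimplement that trivial set identity in the checker. Keep your closing remark that a pointwise evaluation over explicit sets costs $O(|\varphi|\,n^2)$ elementary operations. This is also what the paper's explicit-set checker costs if set operations are not counted as unit cost.
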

\begin{proof}
A valid $c$ exhibits a countermodel, so $\frm\not\models\varphi$. The recursive
evaluation of $\mdl,w\sat\varphi$ costs $O(|\varphi|\,n)$ and shares no state with
the parallel search, so acceptance is an independent witness.
\end{proof}

We implement the checker of \Cref{prop:cert} as a separate recursive evaluator over
explicit sets, sharing no code with the bitmask engine; it is the trust anchor of
the study, and every certificate reported below is accepted by it. The separation
between a fast, intricate search and a slow, transparent checker is deliberate: it
lets us believe the GPU results without auditing the GPU.

\begin{theorem}[Bounded decision and minimal countermodels]
\label{thm:decision}
Fix $\mathcal C\in\{\classK,\classT,\classSfour,\classSfive\}$ and $\varphi$, and
suppose the scan examines every frame in $\mathcal C$ of size $\le N$ and every
valuation. (1) If no countermodel is found and $N\ge b_{\mathcal C}(\varphi)$ then
$\varphi$ is valid on $\mathcal C$. (2) Otherwise the least $n\le N$ with a
countermodel is the minimal countermodel size, witnessed by a certificate.
\end{theorem}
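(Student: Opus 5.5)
The proof splits into the two numbered claims, both resting on \Cref{thm:correct}. That theorem says the scan's verdict for a given frame and valuation is exactly the semantic verdict. So ``no countermodel found among frames of size $\le N$'' means: for every $\frm\in\mathcal C$ with $|\Wset|\le N$ and every valuation, the computed mask is $\mathit{full}$, hence $\frm\models\varphi$. One point needs a short remark here. Valuations of variables not occurring in $\varphi$ are irrelevant, since truth depends only on the variables of $\varphi$. This justifies enumerating only the $2^{kn}$ valuations of \Cref{prop:complexity}.

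For (1), I argue by contraposition using the FMP bound. Suppose $\varphi$ is not valid on $\mathcal C$. By the defining property of $b_{\mathcal C}(\varphi)$ (standard or selective filtration, as recalled in \Cref{sec:prelim}), there is a frame $\frm\in\mathcal C$ with at most $b_{\mathcal C}(\varphi)\le N$ worlds and a valuation refuting $\varphi$ at some world.

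One subtlety must be handled: the scan enumerates frames on $\{0,\dots,n-1\}$ as labelled adjacency matrices. Any finite frame of size $n$ is isomorphic to one such frame. Isomorphism preserves class membership, since reflexivity, transitivity and symmetry are isomorphism-invariant, and it also preserves truth under the transported valuation. Hence the scan examines an isomorphic copy of this countermodel, and by \Cref{thm:correct} that copy produces a mask with a $0$ bit, contradicting the hypothesis. I expect this step to be the main obstacle. The point is not that it is hard, but that the statement depends on $b_{\mathcal C}$ genuinely bounding the size of a countermodel \emph{inside} $\mathcal C$. For $\classSfour$ and $\classSfive$, plain filtration need not land in the class without the transitive/symmetric filtration variants. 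I would cite the logic-specific constructions of \cite{blackburn,chagrov,fine} for exactly this and take $b_{\mathcal C}$ to be defined as that in-class bound.

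For (2), suppose some countermodel was found. Let $n_0$ be the least $n\le N$ at which the scan reports a failing frame. Three facts then give the claim.
\begin{enumerate}[leftmargin=1.5em,itemsep=2pt]
\item The reported frame, valuation and world form a certificate; it is valid by \Cref{thm:correct}, and it is independently checkable by \Cref{prop:cert}. So a countermodel of size $n_0$ exists.
\item For every $n<n_0$, the scan covered all frames of $\mathcal C$ of size $n$ (up to isomorphism, as above) and all valuations, and found no falsifying mask. By \Cref{thm:correct}, no countermodel of size $n$ exists.
\item The cases $n>N$ do not matter for minimality, since $n_0\le N$ already exhibits a countermodel.
\end{enumerate}
Therefore $n_0$ is the minimum countermodel size over $\mathcal C$. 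If no countermodel was found and $N<b_{\mathcal C}(\varphi)$, the theorem asserts nothing, which is consistent with the ``otherwise'' reading.
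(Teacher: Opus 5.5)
Your proof is correct and follows the paper's argument: claim (1) goes by contraposition through the FMP bound and \Cref{thm:correct}, and claim (2) rests on the fact that a falsifying mask at size $n$ exists iff a genuine size-$n$ countermodel does, so the least such $n$ is minimal and its witness is a certificate. Your extra remarks on labelled frames up to isomorphism, on irrelevant variables, and on $b_{\mathcal C}$ being a bound for countermodels inside $\mathcal C$ spell out assumptions that the paper's shorter proof leaves implicit.
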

\begin{proof}
(1) If $\varphi$ were not valid it would be refuted, by the FMP, on a frame of size
$\le b_{\mathcal C}(\varphi)\le N$, which the scan detects by \Cref{thm:correct};
finding none gives validity. (2) A size-$n$ frame yields a falsifying mask iff a
genuine size-$n$ countermodel exists, so the least such $n$ is minimal and its
frame, valuation, and world form a certificate.
\end{proof}

\section{Minimal countermodels are small}
\label{sec:frontier}

\paragraph{Corpus and census.}
The census corpus is every \emph{constant-free} modal formula of at most seven
nodes over two variables, reduced to one representative per semantic equivalence
class on all frames of $\le3$ worlds, giving $\NCORPUS$ formulas; the named axioms
are added, for $\NTOTAL$ formulas. Constants are admitted by the language and by
the active synthesis of \Cref{sec:mirage}, but excluded here so that syntactic size
is comparable with the original $\NRAW$-formula enumeration. We scan every frame
of $\le 5$ worlds in each class ($\NFRAMESfive$ in
$\classK$; the constrained classes by relational filtering), recording for each
(formula, class) the minimal countermodel size, a certificate, and the failure
density at each size. The census is $\TOTALEVALS$ evaluations in $\CENSUSMIN$
minutes; all $\NCERTS$ certificates verify (\Cref{prop:cert}).
At $N=5$, absence of a countermodel is reported only as ``no countermodel through
five worlds'' unless $N\ge b_{\mathcal C}(\varphi)$ or an external proof applies.
The ancillary bound audit records 429 such bounded-only survivors in $\classT$,
477 in $\classSfour$, and 553 in $\classSfive$; only the propositional tautology
and axiom K need external or within-bound discharge for the two $\classK$
survivors.

\paragraph{The frontier.}
Minimal countermodel sizes are sharply concentrated (\Cref{fig:density}a): in
$\classK$, $\REFONE$ formulas are refuted on a single world and $\REFTWO$ on two;
only $\KSURV$ have no countermodel through five worlds, and both are discharged by
external proofs in \Cref{prop:frontier}; none of the refutable formulas requires
three worlds.

\begin{proposition}[Single-formula frontier]
\label{prop:frontier}
Every $\classK$-refutable formula in the census corpus has a countermodel on at
most $\MAXMINCM$ worlds. The only two formulas not refuted by the exhaustive scan
are the propositional tautology and the normality axiom K, both independently
known to be valid on all Kripke frames.
\end{proposition}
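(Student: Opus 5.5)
The proof combines the exhaustive census with \Cref{thm:decision}, plus two external validity arguments. There is one gap to close. A formula could survive the scan through five worlds and yet be $\classK$-refutable, and \Cref{thm:decision}(1) only rules this out when $N=5\ge b_{\classK}(\varphi)$. The plan is therefore to split the corpus by the outcome of the census.

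\emph{Refuted formulas.} For every formula that the scan refutes, the census records the least $n\le 5$ with a falsifying mask. By \Cref{thm:decision}(2) this $n$ is the minimal countermodel size. The census tallies show that this minimum is $1$ for $\REFONE$ formulas and $2$ for $\REFTWO$, with none at $3,4,5$. Each minimum is witnessed by a certificate, and the certificate is accepted by the independent checker of \Cref{prop:cert}. Those minima therefore rest on the checker, not on trust in the GPU evaluator. The claim "at most $\MAXMINCM$ worlds" is immediate for these formulas. The upper bound needs only the certificates. Exactness of the minimum additionally needs the scan over smaller sizes to be complete, which follows from \Cref{thm:correct}.

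\emph{Survivors.} Exactly $\KSURV$ formulas have no countermodel through five worlds. First I identify them syntactically from the census output as the propositional tautology in the corpus and axiom K, $\BOX(p\imp q)\imp(\BOX p\imp\BOX q)$. Then I show directly that each is valid on every Kripke frame, which makes it not $\classK$-refutable at all, so it lies outside the hypothesis of the proposition.
\begin{itemize}
\item For the tautology: at each world the truth values of the subformulas obey the Boolean clauses, so truth-table validity transfers.
\item For K: if every successor of $w$ satisfies $p\imp q$ and every successor satisfies $p$, then every successor satisfies $q$. Equivalently, $\suc[w]\subseteq\tv{p\imp q}\cap\tv{p}\subseteq\tv{q}$, as in \Cref{lem:modal}.
\end{itemize}
Since the corpus is the union of refuted formulas and survivors, both parts of the proposition follow.

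\emph{Main obstacle.} The hard step is not logical but evidential: making the "no formula needs $3$--$5$ worlds" and "only two survivors" counts trustworthy. Bounded-only survivors in the stronger classes show that survival is not automatic. For the survivors, I would rely on the external proofs above rather than on the FMP bound, because $b_{\classK}$ via filtration ($2^{|\Sub(\varphi)|}$) far exceeds $5$. For the refuted side, I would cross-check the reported minima on the CPU reference implementation, which agrees bit for bit with the GPU. This check ensures that no formula was misclassified by an undetected miss at sizes $1$ or $2$.
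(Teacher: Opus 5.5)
Your proposal is correct and matches the paper's argument. The exhaustive five-world census yields checker-verified certificates on one or two worlds for every refuted formula, with no first failures at sizes three through five, and the two survivors (the propositional tautology and axiom K) are discharged by standard validity proofs rather than by the filtration bound; your extra CPU cross-check is harmless but unnecessary, since the certificates alone bound the refuted formulas and the survivors are proved valid outright.
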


\begin{corollary}[Loose filtration bound on the bounded corpus]
\label{cor:loose}
For the enumerated constant-free formulas of at most seven nodes over at most two
variables, the standard $\classK$ filtration bound
$b_{\classK}(\varphi)=2^{|\Sub(\varphi)|}$ is empirically loose whenever
$\varphi$ is refutable: at seven nodes the bound is up to $2^{7}=128$ worlds,
while the realised maximum is $\MAXMINCM$.
\end{corollary}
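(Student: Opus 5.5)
The corollary is essentially a comparison between two numbers, so the plan is to bound each side and show they differ. The realised side comes from \Cref{prop:frontier} together with \Cref{thm:decision}(2). The bound side comes from a count of subformulas.

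\emph{Step 1 (realised side).} For every refutable $\varphi$ in the corpus, \Cref{thm:decision}(2) identifies the least $n\le 5$ at which the exhaustive scan finds a countermodel as the exact minimal countermodel size. Each such witness is backed by a certificate accepted by the independent checker (\Cref{prop:cert}), so the sizes do not rest on trusting the GPU engine. \Cref{prop:frontier} then gives that this least $n$ is at most $\MAXMINCM$ for every $\classK$-refutable formula. The formulas not refuted are the tautology and axiom K; they are valid, so the hypothesis ``$\varphi$ refutable'' excludes them.

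\emph{Step 2 (bound side).} The parse tree of $\varphi$ has $|\varphi|$ nodes, and each node roots one subformula occurrence. Hence $|\Sub(\varphi)|\le|\varphi|\le 7$, and the filtration bound $b_{\classK}(\varphi)=2^{|\Sub(\varphi)|}$ is at most $2^7=128$. Equality holds exactly when all seven subtrees are pairwise distinct. A concrete corpus formula such as $\BOX\DIA\neg(p\wedge q)$ has seven nodes but only six distinct subformulas (the subtrees $\BOX\DIA\neg(p\wedge q)$, $\DIA\neg(p\wedge q)$, $\neg(p\wedge q)$, $p\wedge q$, $p$, $q$), so its bound is $2^6=64$. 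To justify ``up to $128$'', I would exhibit a seven-node formula with seven distinct subformulas, for instance one built on three variables or with constants. Failing that, I would read ``up to $2^7$'' as the ceiling implied by $|\Sub(\varphi)|\le|\varphi|$.

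\emph{Step 3 (looseness).} Every refutable $\varphi$ has at least one variable subformula and a non-atomic root, so $|\Sub(\varphi)|\ge 2$ and $b_{\classK}(\varphi)\ge 4>\MAXMINCM$. Thus the bound strictly exceeds the realised minimal size for every refutable formula, which is what ``loose whenever refutable'' requires. The gap is largest at seven nodes, where the bound reaches $64$ to $128$ against a realised maximum of $2$.

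The main obstacle is Step 2's attainment claim, since repeated atoms in two-variable formulas can force $|\Sub(\varphi)|<7$. I would first try to show that the corpus contains a seven-node formula with seven distinct subformulas; if none exists, I would rely on the ceiling reading. The looseness conclusion itself is robust either way, because Step 3 needs only $b_{\classK}(\varphi)\ge 4$.
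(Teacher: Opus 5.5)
Your route matches the paper's. The corollary is \Cref{prop:frontier} (realised minimal sizes at most $\MAXMINCM$, with the two unrefuted formulas valid) set against the ceiling $|\Sub(\varphi)|\le|\varphi|\le 7$. The paper's ``up to $2^7$'' is exactly your ceiling reading, and it proves no attainment inside the deduplicated corpus. Two of your steps are nevertheless wrong as written.

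\textbf{Step 2.} The obstacle you identify rests on a miscount. $\BOX\DIA\neg(p\wedge q)$ has six nodes, not seven, and the six subtrees you list are pairwise distinct, so it illustrates $|\Sub(\varphi)|=|\varphi|$, not a collapse. Collapse needs equal subterms at incomparable positions: for example, $(p\imp q)\wedge(q\imp p)$ has seven nodes but only five subformulas. So your ``$64$ to $128$'' range at seven nodes is also unjustified. Conversely, any seven-node formula whose leaves carry pairwise distinct variables, such as $\BOX\DIA\neg\BOX(p\wedge q)$ or $\BOX^6p$, has seven distinct subformulas. Nested subtrees differ in size, and two equal subtrees at incomparable positions would repeat a leaf label. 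Hence $2^7$ is already reached in the two-variable constant-free language. Your remedy via a third variable or constants is unnecessary, and it would not speak to this corpus anyway.

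\textbf{Step 3.} The claim that every refutable formula has a non-atomic root is false. The atoms $p,q$ are one-node enumerated formulas refuted on a single world, with $b_{\classK}=2^1=2=\MAXMINCM$. So the uniform inequality $b_{\classK}(\varphi)\ge4>\MAXMINCM$ fails for them. Looseness still holds there, but only by comparing with the formula's own minimal size (one world against a bound of two), so that case must be handled separately.
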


\begin{figure}[htbp]
\centering
\includegraphics[width=\textwidth]{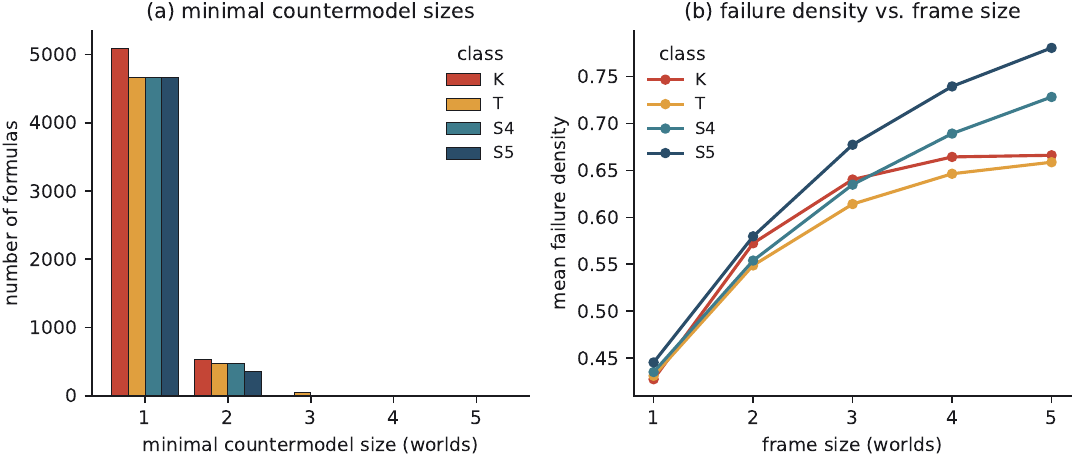}
\caption{Two views of the same small-countermodel phenomenon. (a) The first
failing layer is concentrated at one or two worlds in every logic. (b) Conditional
on refutability, failure mass then grows smoothly with frame size. Together the
panels distinguish \emph{when} a countermodel first appears from \emph{how broadly}
the formula fails once larger frames are available.}
\label{fig:density}
\end{figure}

\Cref{cor:loose} is a corpus-level empirical statement, not a new global
small-model theorem. Standard and selective filtrations, including the transitive
$\mathsf{K4}$ setting, have logic-specific constructions and bounds
\cite{fine,fine2,chagrov}. Within the stated $\le7$-node, $\le2$-variable corpus,
however, the generic exponential bound is a poor predictor: witnessed refutations
live on one or two worlds. The phenomenon is uniform across the four logics
(\Cref{fig:density}a) and across syntactic size: adding modal operators to a
refutable formula does not push its minimal countermodel deeper, because a
falsifying valuation can be exhibited near the root. Two consequences shape the
rest of the paper. Methodologically, every positive minimal-countermodel claim is
exact because all smaller layers were exhausted, while the two negative cases are
settled by standard validity proofs rather than by truncation. Substantively, if
one wants refutations that are genuinely forced to be large, isolated formulas
from this census are the wrong place to look; the natural place is the space of
\emph{biconditionals}, where a large minimal countermodel means precisely that two
formulas agree on every small model and diverge only on a larger one.
\Cref{fig:density}b shows the complementary quantity, mean per-frame failure
density, rising smoothly with frame size as larger frames admit more falsifying
valuations.

\paragraph{Bounded non-refutation between the logics.}
The same census yields the separation counts of \Cref{fig:separation}. Frame-class
inclusion guarantees that a countermodel in a stronger class is also a
countermodel in every weaker class. The off-diagonal entries therefore count
formulas refuted in the row class but with no countermodel through five worlds in
the column class: 428 from $\classK$ to $\classT$, a further 48 to
$\classSfour$, and 118 more to $\classSfive$. They are bounded separation counts,
not theorem counts.

\begin{figure}[htbp]
\centering
\includegraphics[width=\textwidth]{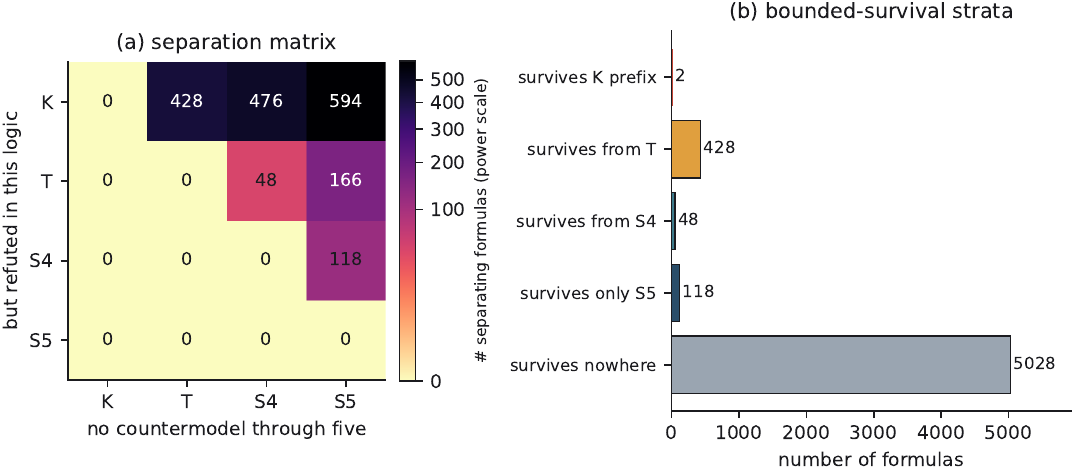}
\caption{(a) Bounded separation counts: formulas with no countermodel through five
worlds in the column class but refuted in the row class. A sublinear power colour
scale preserves the small nonzero strata; cell labels are exact. (b) The corpus
partitioned by the first class in which it survives the five-world scan.}
\label{fig:separation}
\end{figure}

The strata in \Cref{fig:separation}b place each formula in the weakest of the four
classes in which it survives through five worlds. The large ``survives nowhere''
class is expected: a single adversarial valuation usually suffices to refute a
syntactically generated formula on a small frame. The thin upper strata are
bounded survivors, not automatically theorems. These finite-prefix counts are
exact, not sampled, and serve as a baseline for the next section: $\classK$
refutability in this corpus is settled almost immediately, whereas pairwise
indistinguishability can persist to larger models.

\section{Bounded indistinguishability}
\label{sec:mirage}

\begin{definition}[$k$-indistinguishable pair]
\label{def:mirage}
Two formulas $\varphi,\psi$ are \emph{$k$-indistinguishable} if
$\frm\models\varphi\leftrightarrow\psi$ for every frame $\frm$ with $|\Wset|\le k$
but not for some frame with $|\Wset|=k+1$. The least such failing size $k+1$ is
their \emph{separation size}. We also call such a pair a semantic mirage.
\end{definition}

\begin{proposition}[Reduction to biconditional countermodels]
\label{prop:mirage}
$(\varphi,\psi)$ is $k$-indistinguishable iff the biconditional
$\chi=\varphi\leftrightarrow\psi$ has minimal countermodel size exactly $k+1$ in
$\classK$. The separating $(k+1)$-world model is a conclusive witness, verifiable
by \Cref{prop:cert}, and exhibits a world at which $\varphi$ and $\psi$ disagree.
\end{proposition}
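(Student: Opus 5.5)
The plan is to unfold both sides of the biconditional into the same statement about frame sizes, working only with the definitions of frame validity and countermodel from \Cref{sec:prelim}. The one small point to settle first is that $\classK$ is the class of all frames. So "minimal countermodel size of $\chi$ in $\classK$" means the least $n$ such that some $n$-world frame $\frm$, some valuation $\Val$ and some world $w$ give $\mdl,w\nsat\chi$. By the definition of $\frm\models\chi$, the existence of such $\Val,w$ on $\frm$ is exactly $\frm\not\models\chi$. Hence the minimal countermodel size of $\chi$ is $k+1$ iff two things hold: $\frm\models\chi$ for every frame with $|\Wset|\le k$, and $\frm\not\models\chi$ for some frame with $|\Wset|=k+1$.

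Substituting $\chi=\varphi\leftrightarrow\psi$, this is verbatim \Cref{def:mirage}, which gives both directions of the equivalence at once. I would write the two directions separately only to make the quantifiers explicit. For ($\Rightarrow$), agreement on all frames of size $\le k$ rules out countermodels of size $\le k$, and the failing $(k+1)$-frame supplies a countermodel of size $k+1$. For ($\Leftarrow$), minimality forces validity of $\chi$ on every smaller frame, and the minimal countermodel exhibits the failure at size $k+1$. The edge case $k=0$ can be left implicit, since frames have nonempty $\Wset$ in practice. If empty frames are allowed, they validate everything vacuously, so nothing changes.

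For the witness clause, the minimal countermodel $(\frm,\Val,w)$ of size $k+1$ is recorded as a certificate $c=(\chi,k+1,\suc,\Val,w,\classK)$. Membership $\frm\in\classK$ is trivial, so \Cref{prop:cert} makes its validity independently checkable. \Cref{thm:decision}(2) guarantees that the scan produces exactly such a certificate at the least failing size, provided all smaller layers were exhausted.

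It remains to show that $w$ is a point of disagreement. By the standard clause for $\leftrightarrow$ (equivalently, the bitmask $\bcomp{\widehat{\tv\varphi}\oplus\widehat{\tv\psi}}$ under \Cref{thm:correct}), $\mdl,w\nsat\varphi\leftrightarrow\psi$ iff exactly one of $\mdl,w\sat\varphi$ and $\mdl,w\sat\psi$ holds. Since the grammar treats $\leftrightarrow$ as an abbreviation for $(\varphi\imp\psi)\wedge(\psi\imp\varphi)$, I would unfold it through $\imp$ and $\wedge$ in one line.

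I expect no real obstacle. The only point needing care is the first step: distinguishing "no countermodel of size $\le k$" from "valid on all frames of size $\le k$" requires the universal quantification over valuations in $\frm\models\cdot$, so that per-frame validity and per-model countermodels match exactly.
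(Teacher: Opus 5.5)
Your proof is correct and takes essentially the same route as the paper. Both reduce equivalence of $\varphi,\psi$ on a frame $\frm$ to $\frm\models\chi$, read the definition of $k$-indistinguishability as ``$\chi$ valid through $k$, refuted at $k+1$,'' and obtain the disagreement world from the semantics of the biconditional (the paper via \Cref{thm:correct}). The one cosmetic difference is that you identify the minimal countermodel size directly from the definitions and reserve \Cref{thm:decision}(2) for the scan-produced certificate, whereas the paper cites \Cref{thm:decision} for that identification itself.
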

\begin{proof}
$\varphi$ and $\psi$ are equivalent on a frame $\frm$ iff $\frm\models\chi$. Hence
equivalence on all frames of size $\le k$ together with failure at $k+1$ is exactly
$\chi$ being valid up to $k$ and refuted at $k+1$, i.e.\ minimal countermodel size
$k+1$ by \Cref{thm:decision}; by \Cref{thm:correct} the falsifying pointed model is
one where $\varphi,\psi$ differ.
\end{proof}

This recasts the question raised in \Cref{sec:intro} as a measurement distinct
from modal-depth equivalence. A $d$-round bisimulation game asks whether two fixed
pointed models agree on all formulas up to depth $d$; \Cref{def:mirage} instead
fixes two formulas and asks for the least \emph{model size} on which they disagree.
Neither axis determines the other. \Cref{prop:mirage} reduces the size axis to the
minimal-countermodel quantity already produced by the census, now applied to
biconditionals. The additional information is operational: it states how far an
exhaustive finite-model search must grow before this particular inequivalence can
be witnessed, without claiming a new bisimulation or expressiveness hierarchy.

\paragraph{Passive census baseline.}
As a diagnostic, we first give each of the $\NRAW$ constant-free formulas of at
most seven nodes a 64-bit per-size behaviour fingerprint and split fingerprint
classes as the frame size grows. This avoids a quadratic pair enumeration and
produces a rightmost split at $\PASSIVEFRONTIER$ worlds. The hash is used only for candidate generation and
visualisation: because exclusive-or hashes can collide or cancel, no equivalence
claim rests on a matching fingerprint. Every pair promoted to a result is retested
by evaluating its biconditional over the complete smaller-frame prefix.

\paragraph{Active synthesis protocol.}
The census can only discover pairs already present in its grammar. We instead
generate formula pairs by iterating modal contexts. For
$C\in\{\BOX\DIA,\DIA\BOX,\BOX\BOX,\DIA\DIA\}$, bases
$b\in\{\top,\bot\}$, and $1\le m\le4$, the candidate is
$(C^m b,C^{m+1}b)$, giving $\NSYNTH$ pairs. We rank candidates by
\[
R(\varphi,\psi)=a(\varphi,\psi)
 +2\,\mathbf 1[\text{separation at }a+1]
 -0.02\bigl(|\varphi|+|\psi|\bigr),
\]
where $a$ is the largest exhaustively verified agreement layer. One fused launch
per size evaluates every candidate biconditional on all
$\sum_{n=1}^{5}2^{n^2}=33{,}620{,}498$ $\classK$-frames through five worlds; the
constant formulas require only one valuation. We perform $\NSYNTHCHECK$ independent
CPU/GPU parity checks through four worlds. Candidates surviving the exact prefix
are challenged on paths, reverse paths, looped paths, and forked paths of six to
ten worlds. A hit is retained only if the recursive checker accepts its separating
model.

We also ran the same objective as an automated mining experiment over all context
words over $\{\BOX,\DIA\}$ of length at most three, bases
$\{\top,\bot,p,p\to p,p\wedge\neg p,\neg p\vee p\}$, and iterations fitting in
21 nodes. This produced $\NMINE$ candidate pairs. The H100 scan exhausted every
labelled $\classK$-frame through five worlds for every biconditional, checked the
GPU verdict against the independent CPU evaluator through four worlds, and then
stressed the survivors on 39 path, reverse-path, reflexive-path, fork, cycle,
star, sparse, and dense templates at each size from six through sixteen. It found
and independently verified $\NMINEVER$ witnesses; $\NMINEUNSEP$ candidates
survived the exact prefix and all stress templates. Under the score
\[
J(\varphi,\psi)=a(\varphi,\psi)+2\,\mathbf 1[\text{template witness after }a]
 -0.02(|\varphi|+|\psi|)+0.1/|C|,
\]
the top exact five-indistinguishable hits were simple path-depth families such as
$\DIA^5\top$ versus $\DIA^6\top$ and $\BOX^5\bot$ versus $\BOX^6\bot$; the
target pair $\alpha_2,\alpha_3$ was also recovered with the six-world path
witness. Thus the automation rediscovers the hand-stated family, but the output
is still bounded evidence: hits beyond six have an exact five-world prefix proof
and a checked larger template witness, not a proof of minimality through all
intermediate labelled frames.

\paragraph{Adversarial frame synthesis.}
The preceding stress test still supplies frames from named families. We therefore
ran a second H100 experiment that fixes $\chi=\varphi\leftrightarrow\psi$ and
searches directly over synthesized relations $R\subseteq W^2$. The selected set
contains 134 exact-prefix candidates and calibration controls from the mining
report: 90 with a known late template witness, 40 that survived all templates, and
four witness-rich controls. For each $n\in\{6,8,10,12,14,16\}$ the generator
produces labelled relations by stratified Erdos--Renyi sampling, edge-flip
mutation of canonical seeds, SCC bottleneck construction, dead-end insertion,
back-edge injection, and random permutation of bottleneck blocks. A synthesized
hit is retained only after the independent recursive checker accepts the emitted
countermodel certificate. One-variable candidates are evaluated through
$n=12$ only; at $n=14,16$ the run deliberately restricts to constant-formula
pairs to keep the valuation budget explicit.

\begin{table}[htbp]
\centering\scriptsize
\begin{tabular}{@{}lrrrr@{}}
\toprule
Generator & Relations & Verified hits & Non-template hits & GPU eval. s \\
\midrule
Erdos--Renyi          & 1.737M & 55 & 53 &  9.71 \\
Edge-flip mutation    & 1.128M & 35 & 25 &  7.69 \\
SCC bottleneck        & 1.509M &  0 &  0 &  8.98 \\
Dead-end insertion    & 1.290M &  0 &  0 &  8.47 \\
Back-edge injection   & 1.575M &  0 &  0 &  9.03 \\
Permuted bottleneck   & 1.755M &  0 &  0 &  9.12 \\
\bottomrule
\end{tabular}
\caption{GPU-guided adversarial frame synthesis for fixed biconditionals. The
run generated 8,995,007 labelled relations and 586,365,732 pair-frame tests
(94.68B valuation-frame evaluations). ``Non-template'' means the accepted
witness has template-similarity below the report threshold; the metric penalizes
path-, cycle-, star-, and extreme-density-like frames.}
\label{tab:adversarialframes}
\end{table}

This changes the interpretation of the late witnesses. The best non-template
examples include formulas whose earlier witness source was a path, but whose new
checked countermodel is an eight-world Erdos--Renyi frame with 13 edges, density
0.203, two strongly connected components, largest SCC size seven, and
template-similarity 0.066. Overall the synthesized panels produced 90 verified
hits, 78 of them non-template; six of the previously template-resistant
candidates were separated, while 34 template-resistant and 10 late-template
candidates remained unseparated by this adversarial distribution. Thus the
experiment broadens the candidate generator beyond named templates, but still
does not prove non-existence: surviving pairs are
adversarial-frame-resistant for the stated generators, sizes, and valuation
budget, not proven equivalent.

\begin{proposition}[Synthesized late separation]
\label{prop:synthmirage}
Let $\alpha_m=(\BOX\DIA)^m\top$. The formulas $\alpha_2$ and $\alpha_3$, of five
and seven nodes respectively, agree on every Kripke frame of at most five worlds
and are separated by a six-world frame. Hence they are
$\MIRAGEK$-indistinguishable and their separation size is exactly $\MFRONTIER$.
\end{proposition}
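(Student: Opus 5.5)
The plan has three parts. First, a monotonicity observation reduces the agreement claim to one inclusion. Then a direct computation on a six-world path gives the separation. Finally, the exhaustive five-world census, which \Cref{thm:correct} makes trustworthy and \Cref{prop:mirage} turns into a minimal-countermodel statement, gives agreement on every smaller frame. I will also attempt a hand argument for the lower bound. Both formulas are constant, so their truth sets depend only on the frame and a single valuation suffices.

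\emph{Step 1 (monotonicity).} Write $\beta_0=\top$ and $\beta_{j+1}=\BOX\DIA\beta_j$, so $\alpha_m=\beta_m$. On every frame $\tv{\beta_1}\subseteq\tv{\beta_0}=\Wset$. The operator $S\mapsto\tv{\BOX\DIA}(S)$ is monotone, so induction gives $\tv{\beta_{j+1}}\subseteq\tv{\beta_j}$. Hence $\alpha_3\imp\alpha_2$ is valid on all frames. The biconditional $\chi=\alpha_2\leftrightarrow\alpha_3$ therefore fails at $w$ exactly when $w\in\tv{\alpha_2}\setminus\tv{\alpha_3}$.

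\emph{Step 2 (six-world witness).} Take the path $0\to1\to2\to3\to4\to5$ with $5$ a dead end, and iterate the bitmask recurrence of \Cref{lem:modal}:
\[
\tv{\beta_1}=\{0,1,2,3,5\},\quad \tv{\DIA\beta_1}=\{0,1,2,4\},\quad
\tv{\beta_2}=\{0,1,3,5\},\quad \tv{\DIA\beta_2}=\{0,2,4\},\quad
\tv{\beta_3}=\{1,3,5\}.
\]
Thus world $0$ satisfies $\alpha_2$ but not $\alpha_3$. This gives a certificate $(\chi,6,\suc,\varnothing,0,\classK)$, which the recursive checker of \Cref{prop:cert} accepts.

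\emph{Step 3 (agreement through five worlds).} The primary route is the fused scan of \Cref{sec:mirage}. It evaluates $\chi$ on all $33{,}620{,}498$ labelled $\classK$-frames with at most five worlds and finds no failing bit. By \Cref{thm:correct} this is exact, with no appeal to the FMP bound, because the claim only concerns frames of size $\le5$. The result is cross-checked by the CPU parity runs and by the independent SAT bounded-minimality check. \Cref{prop:mirage} then yields $5$-indistinguishability with separation size exactly $6$.

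For a hand proof I would read $\alpha_m$ as an alternating game. A refuter picks a successor (the $\BOX$ step) and a verifier picks a successor (the $\DIA$ step). The verifier loses if stuck at a $\DIA$ move; the refuter loses if stuck at a $\BOX$ move. Define ranks $r(w)=\min\{j:w\notin\tv{\beta_j}\}$ and $s(v)=\min\{j:v\notin\tv{\DIA\beta_j}\}$. These satisfy $r(w)=1+\min_{wRv}s(v)$ and $s(v)=\max_{vRu}r(u)$, with $s=0$ at dead ends. A world in $\tv{\alpha_2}\setminus\tv{\alpha_3}$ has $r=3$. Following optimal play produces a chain with $r$-values $3,2,1$ and $s$-values $2,1,0$, and these six worlds correspond exactly to the path of Step 2.

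The main obstacle is proving that these six worlds are pairwise distinct. A single world could a priori occur both as a refuter position and as a verifier position, and there is no general inclusion between $\tv{\beta_{j+1}}$ and $\tv{\DIA\beta_j}$ to rule this out. I would first try to show that a repeated world lets one shortcut the play and lower $r(w_0)$ below $3$. If that fails, the exhaustive certified scan remains the proof of the lower bound.
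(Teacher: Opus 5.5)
Your Steps 2 and 3 are the paper's proof. The paper uses the same six-world path certificate, with successor masks $(2,4,8,16,32,0)$ and falsifying world $w_0$, and the same exhaustive scan of every frame through five worlds; by \Cref{thm:correct} that scan gives the agreement prefix and hence minimality, and your monotonicity step is a correct but inessential addition. Your optional hand argument can also be closed, which would give the lower bound without the computer. Induct in the order $s=0,r=1,s=1,r=2,\dots$: finite $s(x)$ forces $r(x)=\infty$, because every successor then has finite $r$ and hence infinite $s$; finite $r(x)$ forces $s(x)=\infty$, because some successor then has finite $s$ and hence infinite $r$. So no $w_i$ in your chain can equal any $v_j$, and the six worlds are distinct.
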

\begin{proof}
The exhaustive scan evaluates $\alpha_2\leftrightarrow\alpha_3$ on every frame of
sizes one through five and finds no falsifying world; by \Cref{thm:correct}, this
establishes the agreement prefix. For separation, take the directed path
$w_0\Rel w_1\Rel\cdots\Rel w_5$, with $w_5$ a dead end. At $w_0$, $\alpha_2$
holds because $\alpha_1$ holds at $w_2$. However $\alpha_1$ fails at $w_4$
(its only successor is the dead end), so $\alpha_2$ fails at $w_2$, and therefore
$\alpha_3$ fails at $w_0$. The emitted certificate records successor
masks $(2,4,8,16,32,0)$ and falsifying world $w_0$ and is accepted by the
independent recursive checker. Since all smaller sizes were exhausted, six is
minimal.
\end{proof}

\begin{lemma}[Uniform path-separation family]
\label{lem:pathfamily}
For every $m\ge1$, the directed path on $2m+2$ worlds separates
$\alpha_m=(\BOX\DIA)^m\top$ from $\alpha_{m+1}$. More precisely, at a path world
with $r$ edges remaining to the dead end, $\alpha_m$ is false exactly when
$r\in\{1,3,\ldots,2m-1\}$.
\end{lemma}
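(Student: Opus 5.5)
The plan is to prove the explicit characterisation by induction on $m$, and then read off the separation. Throughout, fix a directed path $w_0\Rel w_1\Rel\cdots\Rel w_L$ with $w_L$ a dead end. Write $r(w_i)=L-i$ for the number of edges remaining. The only structural facts used are these: a world with $r=0$ has no successors, and a world with $r\ge1$ has exactly one successor, whose remaining count is $r-1$. Set $\alpha_0=\top$, so that $\alpha_{m+1}=\BOX\DIA\alpha_m$. Since truth at a path world depends only on the part of the path after it, the claim is a statement about $r$ alone, and it holds uniformly for every path length $L$.

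First, a one-line auxiliary fact. Since each world with $r\ge1$ has exactly one successor, at a world with $r\ge1$ the formula $\BOX\DIA\chi$ holds iff $\DIA\chi$ holds at the successor. At a world with $r=0$ it holds vacuously. Likewise, $\DIA\chi$ holds at a world with remaining count $s$ iff $s\ge1$ and $\chi$ holds at the world with count $s-1$. Combining the two, $\BOX\DIA\chi$ is false at count $r$ iff $r\ge1$ and either $r-1=0$ or $\chi$ is false at count $r-2$. So it is false iff $r=1$, or $r\ge2$ and $\chi$ fails at $r-2$.

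The base case $m=1$ then follows with $\chi=\top$, which never fails: $\alpha_1$ is false exactly at $r=1$. For the inductive step, assume $\alpha_m$ is false exactly on $\{1,3,\ldots,2m-1\}$. By the auxiliary fact, $\alpha_{m+1}$ is false exactly when $r=1$ or $r-2\in\{1,\ldots,2m-1\}$ odd. That set is $\{1,3,\ldots,2m+1\}$, as required. One could equally start the induction at $m=0$ with the empty set, which is what makes the base case uniform.

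For the separation, take $L=2m+1$, i.e.\ the path on $2m+2$ worlds, and look at $w_0$, where $r=2m+1$. This value is odd but exceeds $2m-1$, so $\alpha_m$ holds at $w_0$. It lies in $\{1,\ldots,2m+1\}$, so $\alpha_{m+1}$ fails there. Hence $w_0$ falsifies $\alpha_m\leftrightarrow\alpha_{m+1}$. For $m=2$ this recovers the six-world witness of \Cref{prop:synthmirage}, with masks $(2,4,8,16,32,0)$, as a consistency check.

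There is no real obstacle. The only care needed is the boundary bookkeeping in the auxiliary fact: the dead end makes $\BOX$ vacuously true but $\DIA$ false, and the case $r=1$ must be handled separately from $r\ge2$.
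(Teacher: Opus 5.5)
Your proof is correct and follows essentially the same route as the paper: induction on $m$ via the observation that on a path $\alpha_{m+1}=\BOX\DIA\alpha_m$ is vacuously true at the dead end, false one edge before it, and otherwise copies the value of $\alpha_m$ two edges later, followed by evaluating both formulas at the root, where $r=2m+1$. Your auxiliary fact simply spells out the boundary bookkeeping that the paper compresses into one line.
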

\begin{proof}
For $m=1$, $\BOX\DIA\top$ is false precisely one edge before the dead end. For the
induction step, $\alpha_{m+1}$ is vacuously true at the dead end, false one edge
before it, and on a world with $r\ge2$ remaining edges has the same truth value as
$\alpha_m$ two edges later. Thus its false-distance set is
$\{1\}\cup\{r+2:r\in\{1,3,\ldots,2m-1\}\}=
\{1,3,\ldots,2m+1\}$. At the root of the $(2m+2)$-world path, $r=2m+1$:
$\alpha_m$ is true and $\alpha_{m+1}$ false.
\end{proof}

The lemma gives a linear-size separating frame for every family member; it is a
modal-depth/path-length construction, not a new logical frontier or hierarchy,
and it does not by itself prove minimality. A separate Tseitin SAT encoding
existentially quantifies a labelled relation, valuation, and falsifying world for
each fixed $n$. CaDiCaL 2.1.3 (seed 20260613) proves UNSAT at every smaller size
for $m=1,2,3$, emits SAT witnesses at sizes $4,6,8$, and the recursive checker
accepts each witness. Every UNSAT result has a shipped DRAT proof independently
accepted by \texttt{drat-trim}; CNFs, proofs, hashes, solver versions, and timings
are in the ancillary package. The $m=2$ case is additionally established by the
complete H100 scan in \Cref{prop:synthmirage}.

\begin{table}[htbp]
\centering\small
\begin{tabular}{@{}rrrrrr@{}}
\toprule
$m$ & UNSAT through & First SAT & Solve s & Check s & Largest DRAT MB \\
\midrule
1 & 3 & 4 & 0.026 & 0.260 & 0.001 \\
2 & 5 & 6 & 0.134 & 0.544 & 0.079 \\
3 & 7 & 8 & 2.841 & 3.533 & 4.638 \\
\bottomrule
\end{tabular}
\caption{Proof-producing symbolic baseline for
$\exists\frm,\Val,w\;((\frm,\Val),w\not\models
\alpha_m\leftrightarrow\alpha_{m+1})$. Looping the fixed-size query from one to
$N$ answers existence through $N$. Solve and proof-check times are totals over
all layers through the first SAT result.}
\label{tab:satbaseline}
\end{table}

The symbolic and explicit routes serve different regimes. SAT proves fixed-formula
minimality through seven worlds in seconds and produces proof objects, where
explicit enumeration is blocked by $2^{n^2}$ relations. The H100 route amortises
frame generation and evaluation across thousands of formulas and certificates at
$n\le5$, where invoking a symbolic solver per formula would discard that batching
advantage. We therefore use SAT for deep, narrow minimality and the GPU enumerator
for broad, shallow census and candidate verification.

\paragraph{Results.}
The active protocol finds two dual exact separations at six worlds; the highest
scoring is the pair of \Cref{prop:synthmirage}. It also finds exact separations at
sizes three, four, and five, while several larger formulas remain unseparated
through the exact five-world prefix. The claim is deliberately one-sided: six is
the largest separation synthesized and certified here, not an upper bound for all
formulas of a given size. \Cref{fig:mirage} connects this logical result to the
representation-guided retrieval benchmark of \Cref{sec:aggregate}.

\begin{figure}[htbp]
\centering
\includegraphics[width=\textwidth]{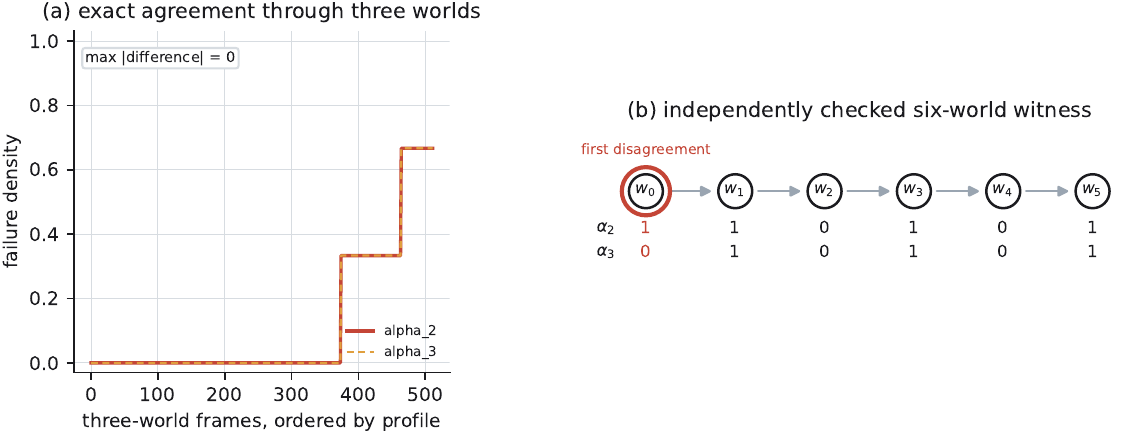}
\caption{Certified semantic mirage, separated from the ranking experiment of
\Cref{fig:progressive}. (a) $\alpha_2$ and $\alpha_3$ have identical failure
profiles on all labelled three-world frames; the exhaustive proof extends this
agreement through five worlds. (b) The independently checked six-world path and
truth vectors; the ringed initial world is the first disagreement.}
\label{fig:mirage}
\end{figure}

The witness exposes why active synthesis matters. The passive grammar contains
$(\BOX\DIA)\top$ only through a three-node encoding of $\top$ as $p\imp p$, so its
seven-node cutoff reaches only the first iteration pair and stops at four worlds.
Admitting the language's primitive constant makes the next pair fit within seven
nodes and moves the exact separation to six. This is not a throughput improvement
over census; it is a change in discovery protocol from asking which existing pairs
split to constructing pairs whose modal alternation delays the split.

\section{Matched system evidence and saturation}
\label{sec:systemevidence}

The baseline of \Cref{tab:performance} is deliberately narrow enough to be
auditable, but it is not sufficient by itself. We therefore use one generic RPN
program in four matched regimes: a textbook world-vector scalar evaluator on one
CPU thread, the bitmask evaluator on 26 OpenMP threads, a hand-written 16-lane
AVX-512/OpenMP evaluator, and the fused H100 kernel.
Every row of \Cref{tab:ablation} uses the same deterministic relation sample,
valuation set, formula, and checksum across regimes. Formula evaluations per
second are reported rather than operator evaluations, so longer formulas expose
their increased work instead of receiving an inflated rate.

\begin{table}[htbp]
\centering\scriptsize
\begin{tabular}{@{}llrrrrrr@{}}
\toprule
Factor & Level & Frames & Valuations & Scalar & Bitmask/OMP & AVX-512/OMP & H100 \\
 & & & & \multicolumn{4}{c}{G formula evaluations/s} \\
\midrule
formula length & 2 ops  & 65,536 & 32 & .0088 & .262 & .300 & 14.87 \\
               & 5 ops  & 65,536 & 32 & .0044 & .223 & .263 & 11.88 \\
               & 11 ops & 65,536 & 32 & .0026 & .237 & 2.589 & 5.50 \\
modal depth    & 1      & 65,536 & 32 & .0088 & .892 & 6.723 & 7.45 \\
               & 4      & 65,536 & 32 & .0047 & .314 & 3.005 & 6.89 \\
               & 8      & 65,536 & 32 & .0032 & .145 & 2.561 & 11.56 \\
variables      & 1      & 65,536 & 32 & .0088 & .867 & 7.112 & 15.72 \\
               & 2      & 8,192  & 1,024 & .0178 & .582 & 8.948 & 7.15 \\
               & 3, underfilled & 512 & 32,768 & .0161 & .732 & 11.447 & .400 \\
valuation count& 3, saturated & 8,192 & 32,768 & .0308 & 1.138 & 15.385 & 6.38 \\
               & 4, underfilled & 32 & 1,048,576 & .0261 & .361 & 5.917 & .0207 \\
frame filter   & $\classK$  & 65,536 & 32 & .0044 & .444 & 4.605 & 7.21 \\
               & $\classT$  & 2,049  & 32 & .0033 & .342 & 2.044 & .240 \\
               & $\mathsf{K4}$ & 304 & 32 & .0022 & .241 & .573 & .0343 \\
\bottomrule
\end{tabular}
\caption{Matched ablation across formula length, modal depth, variable and
valuation count, and frame-class filtering. ``Underfilled'' denotes too few
independent frames to occupy the H100. The scalar and ordinary bitmask columns
show the implementation progression; speedup claims use the explicit AVX-512
column. Table entries are medians over five repeats; the ancillary JSON
contains the timing samples and sample standard deviations. Checksums agree in
every row.}
\label{tab:ablation}
\end{table}

\begin{figure}[htbp]
\centering
\includegraphics[width=\textwidth]{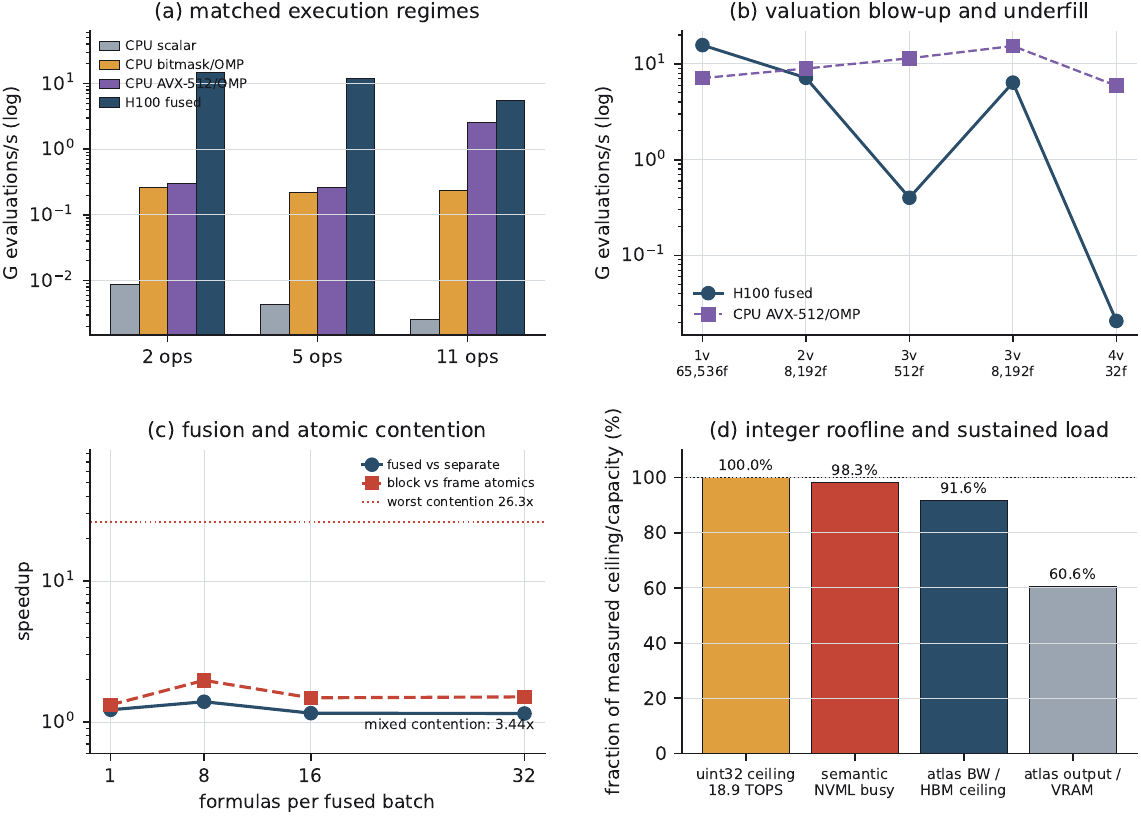}
\caption{System evidence. (a) Matched scalar, bitmask/OpenMP, AVX-512/OpenMP, and fused-GPU
throughput as formula length changes. (b) Variable-count blow-up and the
underfill-to-saturation transition. (c) Fusion reduces launch/output overhead;
block reduction is $26.27\times$ faster than frame atomics under maximum
contention and $3.44\times$ under the mixed workload. (d) Integer and HBM
calibration are separated from application measurements: $\INTTHROUGHPUT$
source-level uint32 TOPS, $\SEMUTIL\%$ mean NVML busy time, $\SEMBW$ GB/s minimum
atlas traffic, and $\ATLASMEM$ GB output.}
\label{fig:system}
\end{figure}

The GPU does not dominate every scale. With three variables and only 512 frames,
the AVX-512 CPU is $28.6\times$ faster than the H100. At 8,192 frames the GPU is
$5.60\times$ faster than ordinary bitmask/OpenMP, but the explicit SIMD baseline
still leads by $2.41\times$. Four variables imply 1,048,576 valuations per
five-world frame; at 32 frames AVX-512 wins by $285.7\times$. The AVX-512 CPU also
wins the filtered $\classT$ and $\mathsf{K4}$ underfill rows. This is the measured
multi-variable and scale-out boundary, not an asymptotic footnote.

Fusion is likewise conditional on its reduction strategy. At batch sizes
1, 8, 16, and 32, block-reduced fusion is respectively $1.22\times$,
$1.40\times$, $1.16\times$, and $1.15\times$ faster than separate launches, while
reducing output bytes by the frame count. The earlier frame-atomic kernel lost up
to $26.27\times$ under adversarial contention and $3.44\times$ on the mixed
workload; shared-memory block reduction removes that failure mode. Finally,
floating-point TFLOPS are not an application metric for an integer/bitwise
semantics kernel. The measured source-level uint32 XOR/AND/OR ceiling is
$\INTTHROUGHPUT$ TOPS; it is a synthetic roofline point, not a dynamic-instruction
efficiency estimate. FP32 reaches $\FPTHROUGHPUT$ TFLOPS only as a separate device
diagnostic. The actual three-variable, depth-ten semantic stress performs 34.36
billion evaluations in 1.427 seconds, or 24.08 G evaluations/s, at $\SEMUTIL\%$
mean NVML busy time. Across ten repeats the elapsed-time standard deviation is
0.000122 s, and mean board power divided by throughput is $\SEMENERGY$
nJ/evaluation. The atlas materialisation writes $\ATLASMEM$ GB and reaches a
conservative $\SEMBW$ GB/s minimum semantic-traffic rate, 91.6\% of the measured
$\HBMTHROUGHPUT$ GB/s HBM ceiling. With Nsight counters blocked by
\texttt{ERR\_NVGPUCTRPERM}, we record RawKernel attributes and 100-ms NVML
power, utilisation, and memory samples.

\FloatBarrier

\section{Representation-guided retrieval and density views}
\label{sec:aggregate}

The atlas matrix is a density-and-ordering view of the same semantic tensor. Let
$D_{ij}\in[0,1]$ be the fraction of valuation--world pairs on which
formula $i$ fails on three-world frame $j$. The $\NCORPUS\times512$ matrix is the
visual encoding: rows are formulas, columns are labelled frames, blue denotes low
failure density, and red high failure density. To place semantically similar frames
next to one another, we centre $D$, take its first principal coordinate, and sort
columns by that coordinate; rows are ordered by the corresponding formula
coordinate. This semantic seriation is deterministic and scales as an SVD of the
display matrix, independent of the much larger census scan.

We evaluate matrix-ordering fidelity rather than relying on visual impression.
This is not a measure of human usefulness. For a column
ordering $\pi$, define adjacent variation
\[
A(\pi)=\frac{1}{B(F-1)}\sum_{i=1}^{B}\sum_{j=1}^{F-1}
  |D_{i,\pi(j+1)}-D_{i,\pi(j)}|.
\]
Here $B$ and $F$ are the numbers of displayed formulas and frames. Lower values
place similar failure profiles together. Across 24 random
permutations, mean $A$ is $0.1360$; ordering by frame out-degree gives $0.06369$;
semantic seriation gives $0.01241$, reductions of $\VISRANDRED\%$ and
$\VISDEGRED\%$ respectively (\Cref{fig:atlas}c). This supplies an explicit
baseline and quantitative criterion for the ordering, but it measures adjacent
profile smoothness rather than human task performance. No perceptual user study
was conducted, a limitation retained in \Cref{sec:limits}.

\begin{figure}[H]
\centering
\includegraphics[width=\textwidth]{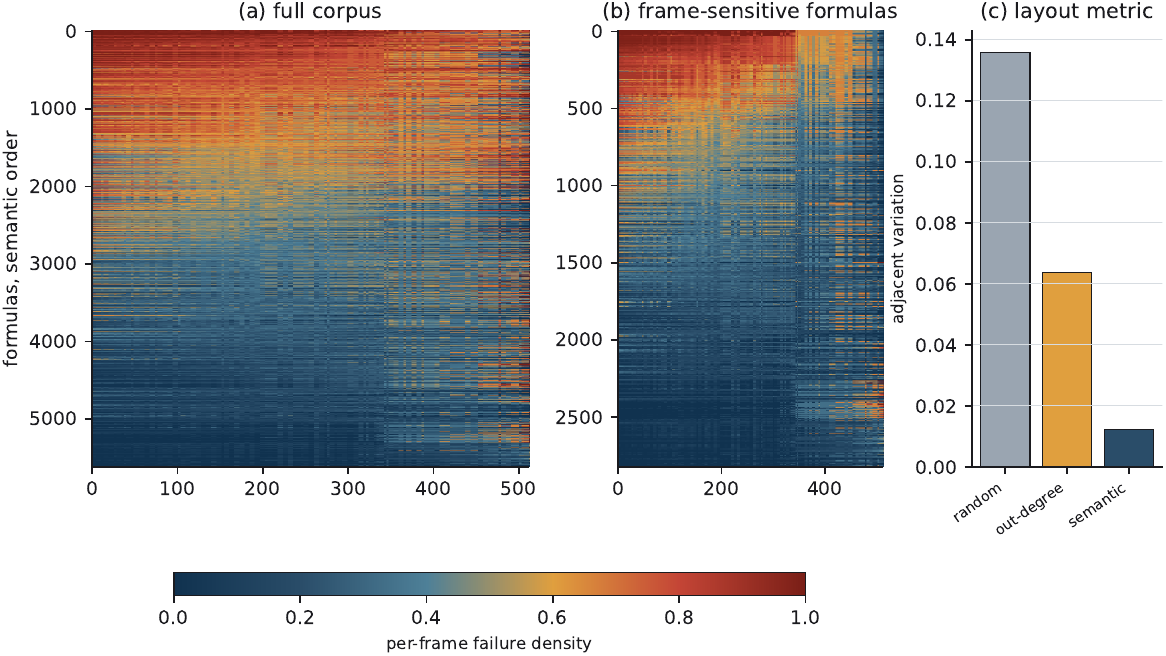}
\caption{Semantic failure atlas over all 512 labelled three-world $\classK$
frames. (a) The full census corpus under semantic seriation. (b) The $2{,}825$
frame-sensitive formulas. (c) Adjacent-column variation for random, out-degree,
and semantic orderings; lower is better.}
\label{fig:atlas}
\end{figure}

\paragraph{Progressive atlas construction.}
The retrieval corpus contains $\PROGCORPUS$ deterministic representatives of
sizes 1--21 and modal depths 0--10; $\PROGBOOLEAN$ contain an explicit Boolean
binary connective, so the experiment is not restricted to unary modal words.
Every layout uses this same formula universe. Atlas-3 concatenates density panels
for all frames through three worlds. Atlas-4 and Atlas-5 append stratified panels
of 2,048 four-world and 4,096 five-world frames, including canonical paths, forks,
loops, cycles, sparse random, and selected dense relations. Exact semantic
screening is separate from layout: two independent 64-bit fingerprints are
computed over every labelled frame and valuation at each size through five.

We compare PCA on Atlas-3, Atlas-4, and Atlas-5 with UMAP and spectral layouts of
the Atlas-5 feature space and a random layout. For each representation we report
15-neighbour purity under the exact source-prefix signature, trustworthiness and
continuity on a fixed semantic sketch, Procrustes correlation after resampling,
the fraction of points sharing a $64\times64$ display bin, and exact $k$-nearest
query latency. The PCA layouts have stability above $0.9988$ and trustworthiness/
continuity above $0.978$, whereas UMAP and spectral stability collapses to $0.095$
and $0.138$. These are dimensionality-reduction fidelity diagnostics, not evidence
of human visual utility. Two-dimensional clutter remains high (98.4--99.98\% for
the semantic layouts), so displayed point fields are density-aggregated rather
than rendered as raw scatter. Querying
all $\PROGCORPUS$ points takes 55--188 ms, or 5.5--18.8 microseconds per point.

\paragraph{Common-universe retrieval task.}
Each layout proposes 256 neighbours per formula; the union defines one
$\PROGUNIVERSE$-pair candidate universe, and every method ranks that same universe
under a $\PROGBUDGET$-pair budget. Across overlapping budgets, $\PROGSEARCHED$
unique pairs are evaluated on 390 frames at each of $n=6,7,8$: canonical path,
reverse path, looped path, fork, cycle, and star frames plus random sparse and
selected dense samples. A reported hit must (i) match the 128-bit fingerprint
through five worlds, (ii) have a six--eight-world distinguishing model, (iii)
survive a complete biconditional scan over every labelled frame and valuation
through five worlds, and (iv) pass the independent recursive checker. The exact
stage rescans all 3,522 screened late candidates in 75.9 seconds; all pass, so the
observed dual-hash collision count is zero.

\begin{figure}[H]
\centering
\includegraphics[width=\textwidth]{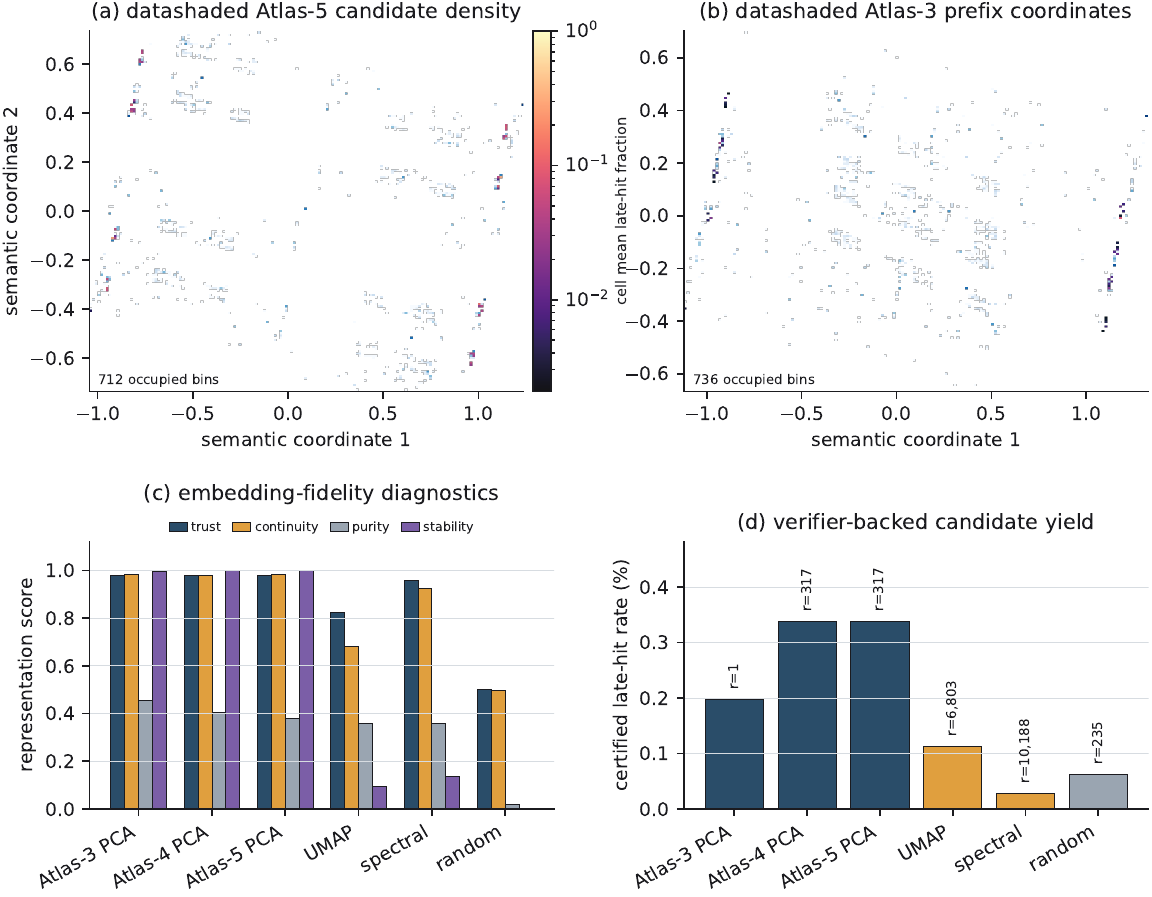}
\caption{Density-aggregated candidate views and retrieval diagnostics. Panels
(a,b) bin formulas in screen space: blue encodes formulas per occupied bin and
magma encodes the bin's mean exact late-hit fraction; no individual-point scatter
is drawn. Panel (c) reports embedding-fidelity diagnostics, not human usefulness.
Panel (d) reports verifier-backed candidate yield under one-million-pair budgets;
labels give the rank of the first verified hit.}
\label{fig:progressive}
\end{figure}

\begin{table}[htbp]
\centering\scriptsize
\begin{tabular}{@{}lrrrrrrr@{}}
\toprule
Layout & Exact hits & Hit rate & First rank & Max. $n$ & Trust./cont. & Stability & Verify ms \\
\midrule
Atlas-3 PCA & 1,970 & 0.1970\% & 1 & 8 & .979/.984 & .9989 & 21.55 \\
Atlas-4 PCA & 3,377 & 0.3377\% & 317 & 8 & .982/.980 & .99995 & 21.55 \\
Atlas-5 PCA & 3,391 & 0.3391\% & 317 & 8 & .979/.986 & .99996 & 21.55 \\
UMAP        & 1,130 & 0.1130\% & 6,803 & 8 & .824/.683 & .095 & 21.55 \\
Spectral    & 283   & 0.0283\% & 10,188 & 8 & .959/.926 & .138 & 21.55 \\
Random      & 627   & 0.0627\% & 235 & 8 & .503/.497 & 0 & 21.55 \\
\bottomrule
\end{tabular}
\caption{Representation-guided candidate-ranking evaluation on one formula universe, pair
universe, and budget. ``Exact hits'' have a complete $n\le5$ prefix proof and an
independently checked $n=6,7,$ or $8$ witness. Verify ms is mean exhaustive-prefix
time per fingerprint-screened candidate; ranking and stress time are additional.}
\label{tab:atlasdiscovery}
\end{table}

Atlas-3, Atlas-4, and Atlas-5 deliver respectively $3.14\times$, $5.39\times$,
and $5.41\times$ the random exact-hit rate; UMAP reaches $1.80\times$, while the
unstable spectral layout falls to $0.45\times$. Atlas-5 yields 1,215 six-world,
1,096 seven-world, and 1,080 eight-world exact hits. These results show that
small-frame semantic representations can prioritize verifier work better than
random ranking on this candidate universe. They do not establish that a person
can discover the pairs from a two-dimensional picture.

\paragraph{Does the picture help?}
To separate two-dimensional layout from representation quality, we reran the
candidate-retrieval task on the same $\PROGUNIVERSE$-pair universe and
one-million-pair budget, but
ranked it with raw Atlas-5 features, PCA-50, PCA-10, the two-dimensional PCA
atlas, UMAP-2, and random coordinates. For each method and
$k\in\{16,32,64,128,256,512,1024\}$, candidate pairs are the $k$ nearest
neighbours per formula, intersected with the fixed universe and then truncated by
distance to the common budget. The certified-hit set is the same 3,522 pairs
passing the complete $n\le5$ biconditional rescan; bootstrap intervals are over
candidate pairs for precision and over certified hits for recall. Full results
are in the ancillary \texttt{picture\_help\_h100.json}; endpoint rows are shown in
\Cref{tab:picturehelp}.

\begin{table}[htbp]
\centering\scriptsize
\begin{tabular}{@{}lrrrrrr@{}}
\toprule
Representation & \multicolumn{3}{c}{$k=16$} & \multicolumn{3}{c}{$k=1024$} \\
\cmidrule(lr){2-4}\cmidrule(l){5-7}
 & Hits & Prec. & Recall & Hits & Prec. & Recall \\
\midrule
Raw high-D   & 3,519 & 2.687\% & 99.91\% & 3,074 & 0.307\% & 87.28\% \\
PCA-50       &   993 & 0.802\% & 28.19\% &   553 & 0.055\% & 15.70\% \\
PCA-10       &   987 & 0.859\% & 28.02\% & 3,166 & 0.317\% & 89.89\% \\
PCA-2 atlas  &   995 & 0.872\% & 28.25\% & 3,242 & 0.324\% & 92.05\% \\
UMAP-2       &   921 & 1.007\% & 26.15\% & 1,426 & 0.143\% & 40.49\% \\
Random       &     5 & 0.006\% &  0.14\% &   147 & 0.015\% &  4.17\% \\
\bottomrule
\end{tabular}
\caption{``Does the picture help?'' endpoint comparison on the fixed
$5{,}467{,}766$-pair universe. Raw high-dimensional features are best at the
tightest neighbourhood ($k=16$; precision 95\% CI $[2.55,2.82]\%$, recall
$[99.80,100]\%$), while the two-dimensional PCA atlas is best at $k=1024$
(precision CI $[0.276,0.374]\%$, recall CI $[91.11,92.85]\%$).}
\label{tab:picturehelp}
\end{table}

The raw high-dimensional representation is the sharpest retrieval signal at very
small $k$, while the two-dimensional PCA atlas retains enough of that signal to
win the broad $k=1024$ ranking. The supported claim is representation-guided
information retrieval followed by exact verification; the datashaded atlas is a
candidate-generation interface, not evidence that visual inspection causes the
discoveries.

\paragraph{Embedding benchmark.}
We also score standard embeddings on the same fixed universe and certified-hit
set, now ranking every eligible pair directly by distance in the embedding space
and taking the nearest one million pairs. The benchmark records task yield,
trustworthiness/continuity, resampling stability where deterministic replicas are
available, and a coarse clutter statistic: the fraction of formula points sharing
a $64\times64$ screen bin. Full rows are in the ancillary
\texttt{embedding\_benchmark\_h100.json}.

\begin{table}[htbp]
\centering\scriptsize
\begin{tabular}{@{}lrrrrr@{}}
\toprule
Embedding & Hits & Recall & Trust./cont. & Stability & Clutter \\
\midrule
PCA-10           & 3,337 & 94.75\% & .984/.984 & .99999 & 98.61\% \\
Parametric UMAP-2& 1,548 & 43.95\% & .982/.983 & --     & 99.99\% \\
UMAP-2           & 1,426 & 40.49\% & .819/.686 & .1109  & 99.95\% \\
PHATE-2          & 1,353 & 38.42\% & .925/.952 & --     & 99.30\% \\
PCA-2 atlas      &   728 & 20.67\% & .979/.986 & .99999 & 98.61\% \\
PCA-50           &   673 & 19.11\% & .985/.982 & .99999 & 98.61\% \\
Autoencoder-2    &   664 & 18.85\% & .779/.783 & --     & 93.42\% \\
TriMap-2         &   402 & 11.41\% & .946/.862 & --     & 95.60\% \\
Spectral-2       &   285 &  8.09\% & .954/.930 & --     & 99.98\% \\
\bottomrule
\end{tabular}
\caption{Embedding benchmark on the same $\PROGUNIVERSE$-pair universe and
$\PROGBUDGET$-pair budget as \Cref{tab:picturehelp}. Yield is measured against
the 3,522 exact-prefix-certified late hits. PCA-10 is the strongest retrieval
representation; among two-dimensional embeddings, parametric UMAP gives the
highest yield but is visually cluttered under this screen-bin metric.}
\label{tab:embeddingbench}
\end{table}

The benchmark strengthens the conservative reading of the atlas. The best pure
retrieval representation is ten-dimensional PCA, not a two-dimensional picture.
Two-dimensional layouts still carry useful retrieval signal, but their value is
as candidate-generation interfaces whose proposals are certified by the verifier,
not as standalone evidence.

\paragraph{Scaling the picture.}
We then scaled the display problem by $10\times$, from 10,000 to 100,000
deterministically generated formulas, and rebuilt Atlas-5 panel features on the
H100. The resulting scatter plot is not a readable figure: on a
$1024\times1024$ screen grid, only 2,782 bins are occupied, 98.61\% of points
share a bin, the largest bin contains 21,171 formulas, and a 1,000-label sample
has 98.0\% rectangular label overlap. Rendering each point as a radius-two mark
would issue 2.5M point-pixel draw operations but cover only 39,222 pixels, an
overdraw factor of $63.7\times$. Density aggregation therefore becomes a
requirement, not a stylistic choice: datashading emits one aggregate per occupied
bin, a $35.9\times$ draw-op reduction. The same run also tested four candidate
policies under the H100 verifier: PCA-10 nearest neighbours, dense screen cells,
random pairs, and an alternating-family anchor control. Among 2,000 proposed
pairs each, the first three produced larger-template disagreements but zero
exact-prefix-certified late hits; the eight-pair anchor control produced three.
The conclusion is deliberately negative and useful: at this scale, visual density
alone is not a discovery engine. The successful pipeline needs a structural or
representation prior to propose candidates, and the H100 verifier remains the
source of ground truth.

\begin{table}[htbp]
\centering\scriptsize
\begin{tabular}{@{}lrrrr@{}}
\toprule
Candidate policy & Pairs & Template witnesses & Certified hits & Hits/GPU-hour \\
\midrule
PCA-10 nearest neighbours & 2,000 & 833 & 0 & 0 \\
Dense datashade cells     & 2,000 & 320 & 0 & 0 \\
Random pairs              & 2,000 & 1,864 & 0 & 0 \\
Alternating anchor control&     8 & 6 & 3 & $2.63\times10^5$ \\
\bottomrule
\end{tabular}
\caption{Ten-times corpus-scale datashading audit. Template witnesses are
larger-frame disagreements on the stress panels; certified hits additionally pass
the complete $n\le5$ biconditional rescan and independent witness checker.}
\label{tab:datashading}
\end{table}

\FloatBarrier

\section{Consistency check: recovering frame correspondents}
\label{sec:backproject}

As a check that the engine recovers established structure rather than artefacts,
we reconstruct first-order frame conditions from refutation data. Over all frames
of size $\le N$ write $\mathrm{Val}_{\le N}(\varphi)=\{\frm:\frm\models\varphi\}$,
and for each condition $X$ in a fixed library write
$\mathrm{Ext}_{\le N}(X)=\{\frm:\frm\models X\}$. The recovered condition of
$\varphi$ is an $X$ with equal extensions.

\begin{proposition}[Soundness over the searched universe]
\label{prop:bp}
If $X$ is recovered for $\varphi$ over $N$, then $\frm\models\varphi\iff\frm\models
X$ for all $\frm$ of size $\le N$. If $\varphi$ is Sahlqvist with first-order
correspondent $X^{\dagger}$, then $X$ and $X^{\dagger}$ agree over all frames of
size $\le N$.
\end{proposition}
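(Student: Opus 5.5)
The plan is to argue directly from the definitions of $\mathrm{Val}_{\le N}$ and $\mathrm{Ext}_{\le N}$, with the only substantive input being that the computed extension of $\varphi$ really is its frame-validity set. \Cref{thm:correct} supplies this fact.

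\textbf{Step 1: the first claim.} The engine computes, for each frame $\frm$ of size $\le N$, whether the mask of $\varphi$ is $\mathit{full}$ under every valuation. By \Cref{thm:correct} this holds iff $\frm\models\varphi$. So the computed set is exactly $\mathrm{Val}_{\le N}(\varphi)$, not an approximation of it.

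The library condition $X$ is a first-order sentence over $R$. We evaluate it on each finite frame directly, by checking its quantifiers over $\Wset$; that check is exact as well, giving $\mathrm{Ext}_{\le N}(X)$. Recovery means these two sets are equal. Unfolding set equality frame by frame gives $\frm\models\varphi\iff\frm\models X$ for every $\frm$ of size $\le N$.

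\textbf{Step 2: the second claim.} By the Sahlqvist correspondence theorem~\cite{blackburn}, $\frm\models\varphi\iff\frm\models X^{\dagger}$ for every frame. This holds in particular for every frame of size $\le N$.

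Now chain this with Step 1. For any $\frm$ of size $\le N$ we get $\frm\models X\iff\frm\models\varphi\iff\frm\models X^{\dagger}$. Hence $\mathrm{Ext}_{\le N}(X)=\mathrm{Ext}_{\le N}(X^{\dagger})$.

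\textbf{Expected obstacle.} The hard part is bookkeeping, not mathematics. Step 1 needs the scan to cover \emph{every} frame of size $\le N$ and \emph{every} valuation; the uncertified sampled panels would not suffice. This is what lets us invoke \Cref{thm:correct} exhaustively, as in \Cref{thm:decision}.

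We should also say explicitly that the conclusion is confined to size $\le N$. Nothing here transfers agreement of $X$ with $\varphi$ beyond $N$. That point matters for non-Sahlqvist formulas such as McKinsey, where no first-order $X^{\dagger}$ exists, so only the first claim applies.
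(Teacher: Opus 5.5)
Your proposal is correct and matches the paper's proof: the first claim just unfolds recovery as the equality $\mathrm{Val}_{\le N}(\varphi)=\mathrm{Ext}_{\le N}(X)$, and the second chains that with the Sahlqvist correspondence restricted to frames of size $\le N$. Your extra appeal to \Cref{thm:correct} only checks that the engine's computed extension equals the semantic one. The paper builds that into its definition of $\mathrm{Val}_{\le N}$, so it is harmless bookkeeping rather than a different route.
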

\begin{proof}
The first claim is the definition; the second combines it with the Sahlqvist
equivalence over those frames~\cite{sahlqvist}.
\end{proof}

\begin{table}[htbp]
\centering\small
\begin{tabular}{@{}llll@{}}
\toprule
Axiom & Formula & Recovered condition & Recovered \\
\midrule
T  & $\BOX p\imp p$              & reflexive                                   & yes \\
D  & $\BOX p\imp\DIA p$          & serial                                      & yes \\
4  & $\BOX p\imp\BOX\BOX p$      & transitive                                  & yes \\
5  & $\DIA p\imp\BOX\DIA p$      & Euclidean                                   & yes \\
B  & $p\imp\BOX\DIA p$           & symmetric                                   & yes \\
CD & $\DIA p\imp\BOX p$          & partially functional                        & yes \\
C4 & $\BOX\BOX p\imp\BOX p$      & dense                                       & yes \\
G  & $\DIA\BOX p\imp\BOX\DIA p$  & directed (confluent)                        & yes \\
M  & $\BOX\DIA p\imp\DIA\BOX p$  & \emph{none in the library}                  & \textbf{no} \\
\bottomrule
\end{tabular}
\caption{Conditions recovered from countermodel data over all frames of $\le4$
worlds. Every Sahlqvist landmark is recovered; McKinsey is the named control with
no match in the tested library.}
\label{tab:backproj}
\end{table}

The procedure reconstructs every textbook correspondent and returns no tested
condition for McKinsey, consistent with its non-first-order-definable frame class
but not a new proof of that fact~\cite{goldblatt,chagrov,vanbenthem}. Over the
whole corpus it recovers a condition for $\NEXACTBP$ formulas, each re-confirmed
over all $\NFRAMESfive$ five-world frames; $\NTRIVIAL$ formulas have constant
extensions over the searched universe, and $\NNOCORR$ remain unmatched. This is a control, not a
contribution: where theory gives the answer, the recovered extension agrees.

\begin{figure}[H]
\centering
\includegraphics[width=\textwidth]{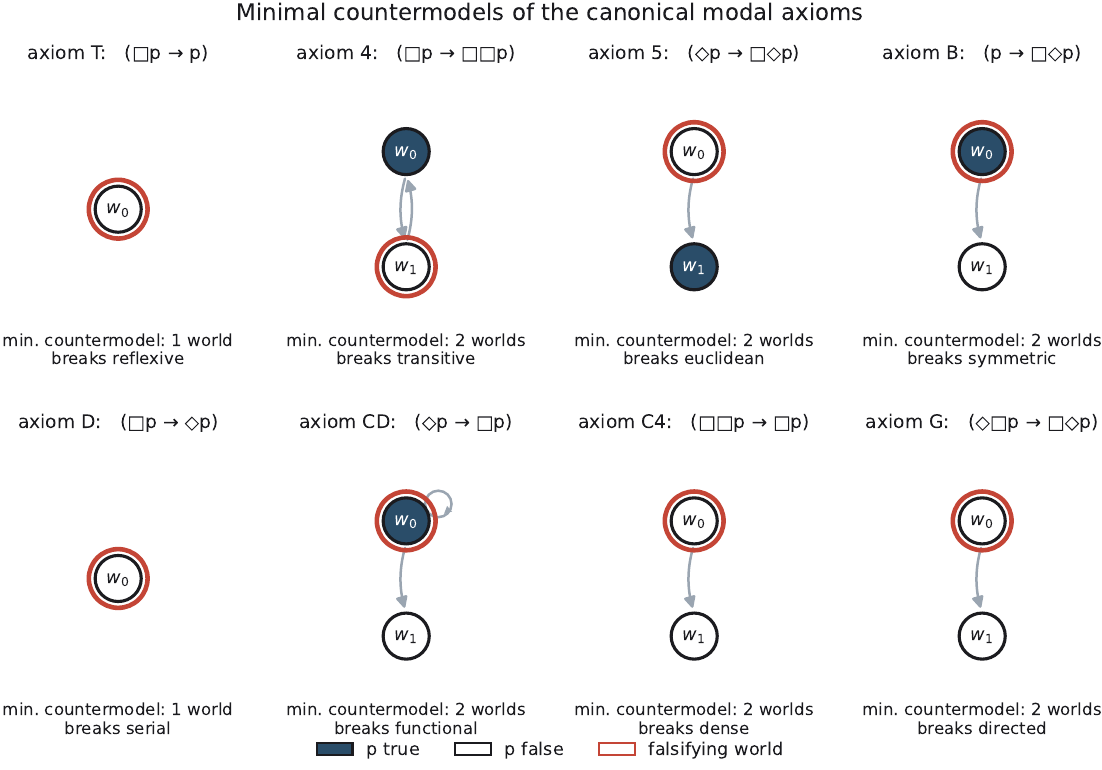}
\caption{Minimal countermodels of the canonical modal axioms. Node fill marks
where $p$ holds; the ringed world is where the axiom fails; the caption of each
panel states the frame condition the model violates.}
\label{fig:gallery}
\end{figure}

\FloatBarrier

\section{Discussion and scope}
\label{sec:limits}

Search is bounded, and we report accordingly. By \Cref{thm:decision} the census
decides validity exactly for formulas whose FMP bound is within the searched size;
the single-formula $\classK$ statement of \Cref{prop:frontier} is exact because every
$\classK$-refutable census formula has a witnessed failure by size two and the two
$\classK$ survivors are independently valid. Passive fingerprint matches are heuristic candidates,
not proofs. By contrast, \Cref{prop:synthmirage} is exact: every smaller frame is
enumerated and the six-world separation is certified. It establishes existence of
a separation at six, not a global upper bound or a complete frontier for all
seven-node formulas.

\paragraph{Failure-mode audit.}
The implementation has explicit finite limits. Truth and successor masks are
32-bit integers, the CUDA successor array has 16 slots, formula programs have a
48-slot stack, and at most four propositional variables are decoded into registers;
the 64-bit valuation index additionally requires $kn<63$. These are implementation
limits, not logical bounds. Exhaustive $\classK$ enumeration grows as $2^{n^2}$:
$n=5$ has 33,554,432 labelled frames, while $n=6$ already has $2^{36}$, so the
complete census stops before the six-world stress families. Variable count is a
second exponential: valuations grow as $2^{kn}$. The measured four-variable,
five-world underfilled case is CPU-faster despite the H100, and storing a full
truth tensor is out of scale even when fused reductions remain small.

Fingerprints are never treated as proofs. The progressive experiment uses two
independent 64-bit set hashes to screen candidates, then exhaustively scans the
biconditional of every reported late hit over every labelled frame and valuation
through five worlds. Thus a hash collision can increase verification work but
cannot create a reported result. Beyond five worlds, path, fork, looped-path,
cycle, random sparse, and selected dense families find witnesses but do not prove
the absence of others; the adversarial-frame synthesis panels enlarge this
challenge set but remain a sampled search distribution, not an exhaustive
enumeration of $2^{n^2}$ relations. Adjacent variation, trustworthiness,
continuity, and Procrustes stability measure ordering or embedding fidelity, not
human visual usefulness. The 98.4--99.98\% collision rates and 98.61\% collapse at
$10\times$ scale rule out a readable raw scatter; the figures therefore use
density aggregation. No controlled user/task study was conducted, so the paper
claims only representation-guided candidate retrieval plus verification. Atlas-4
and Atlas-5 use stratified density panels for layout, while exact-prefix
verification still uses all frames.
Finally, the object language has one accessibility relation; multiple modalities
and relation interaction remain outside the present evidence.

\section{Related work}
\label{sec:related}

Finite Kripke model checking, tableaux, and resolution for modal logic are
classical~\cite{blackburn,chagrov}, and the FMP and its filtration bounds are
standard~\cite{blackburn,fine}; our contribution is not a decision procedure but
an exhaustive semantic experiment plus an active synthesis protocol. The semantic
questions we quantify descend from correspondence and definability theory:
Sahlqvist's theorem~\cite{sahlqvist}, van Benthem's modal-classical
correspondence~\cite{vanbenthem}, and the Goldblatt--Thomason characterisation of
modally definable frame classes~\cite{goldblatt} together explain why
back-projection succeeds on Sahlqvist axioms and why McKinsey has no global
first-order frame correspondent. Our
notion of bounded indistinguishability is complementary to modal-depth
stratification by finite bisimulation games~\cite{dawarotto}: it fixes formulas and
measures the size of the least separating model. Rosen's finite van Benthem
theorem preserves the bisimulation-invariant characterisation on finite
structures~\cite{rosen}, while the modal $\mu$-calculus alternation hierarchy
classifies fixpoint expressiveness~\cite{bradfield}; neither is a model-size
frontier of the kind measured here. Computationally, the PSPACE-completeness of modal
satisfiability~\cite{ladner} bounds what is feasible in general and motivates the
small-model regime; symbolic model checking compresses one large state space with
BDDs~\cite{bryant,mcmillan}, whereas we evaluate many small models against many
formulas explicitly and in parallel, a complementary regime. In information visualisation, seriation orders
matrix rows or columns to reveal structure~\cite{hahsler}; reorderable matrices
and Bertin-style tabular views make this an interactive visual-analysis
operation~\cite{perin}. Our atlas supplies a domain-specific semantic matrix and
an explicit continuity metric for comparing such orderings. The retrieval
experiment treats semantic proximity as a proposal mechanism and evaluates it by
exact-prefix downstream yield; the density view is not evaluated as a human
discovery interface.

\section{Conclusion}
\label{sec:conclusion}

Exhaustive finite-model evaluation, made cheap by bitmask semantics and one
accelerator, supports both census and adversarial discovery. The census shows that
ordinary small formulas fail almost immediately. Active synthesis then finds what
passive enumeration misses: $(\BOX\DIA)^2\top$ and $(\BOX\DIA)^3\top$ are exactly
$\MIRAGEK$-indistinguishable and split on a certified six-world path. The
representation study then uses small-frame semantic geometry to rank candidates:
raw features, PCA, UMAP, spectral, and random representations rank one common pair
universe, and every reported six--eight-world hit survives an exhaustive
five-world prefix scan and an independent witness check. Adversarial frame synthesis removes the named-template
assumption for selected biconditionals, producing 90 checked separations, including
78 non-template witnesses, while reporting the remaining survivors only as
resistant to the stated search distribution. Correspondence recovery remains a
control against known logic. The main result is therefore not raw H100 throughput;
it is a reproducible protocol in which the accelerator proposes or eliminates
extremal objects and a small independent checker certifies every positive witness.

\paragraph{Reproducibility.}
The evaluator, fused kernels, active-synthesis, bounded-indistinguishability
mining, adversarial frame synthesis, progressive-atlas and datashading
experiments, SAT family checks, measured scalar and C++/OpenMP baselines, system
ablations, hardware calibration, embedding-fidelity metrics, figure scripts, census data, and
every certificate are provided, together with the independent checker and the
tests that reproduce the Sahlqvist correspondences.

\end{document}